\newcommand{\Rset}{\ensuremath{\mathbb R}}
\newcommand{\cI}{\ensuremath{\mathcal I}}
\theoremstyle{plain}
\newtheorem{thm}{Theorem}
\newtheorem{theorem}[thm]{Theorem}
\newtheorem{proposition}[thm]{Proposition}
\newtheorem{corollary}[thm]{Corollary}
\newtheorem{lemma}[thm]{Lemma}
\newtheorem{claim}[thm]{Claim}
\theoremstyle{definition}
\newtheorem{remark}[thm]{Remark}
\newtheorem{example}[thm]{Example}
\newcommand{\vote}{{\sf vote}}
\newcommand{\weakvote}{{\sf vote}^{\bullet}}
\newcommand{\score}{{\sf score}}
\title{Solving the Maximum Popular Matching
Problem \\with Matroid Constraints\thanks{A preliminary version of this paper has appeared in the proceedings of the 12th Japanese-Hungarian Symposium on Discrete Mathematics and Its Applications, 2023.}}
\author{Gergely Csáji\thanks{MTA-ELTE Momentum Matroid Optimization Research Group, Department of Operations Research, Eötvös Loránd University, Budapest, Hungary. Email: \texttt{csajigergely@student.elte.hu} and Mechanism Design Research Group, Institute of Economics and Regional Studies, Hungary.  Email: \texttt{csaji.gergely@krtk.hu}}\and
Tamás Király\thanks{ELKH-ELTE Egerv\'ary Research Group, Department of Operations Research, E\"otv\"os Loránd University, Budapest, Hungary. Email: \texttt{tamas.kiraly@ttk.elte.hu}}\and
Yu Yokoi\thanks{Department of Mathematical and Computing Science,
School of Computing,
Tokyo Institute of Technology, Tokyo, Japan. Email: \texttt{yokoi@c.titech.ac.jp}}}
\date{\empty}
\begin{document}
\maketitle

\begin{abstract}
We consider the problem of finding a maximum popular matching in a many-to-many matching setting with two-sided preferences and matroid constraints. This problem was proposed by Kamiyama~(2020) and solved in the special case where matroids are base orderable. Utilizing a newly shown matroid exchange property, we show that the problem is tractable for arbitrary matroids. We further investigate a different notion of popularity, where the agents vote with respect to lexicographic preferences, and show that both existence and verification problems become coNP-hard even in the $b$-matching case.
\end{abstract}

\section{Introduction}

The notion of \emph{popular matching} is a natural adaptation of the notion of weak Condorcet winner \cite{nicolas1785essai} to the marriage model of Gale and Shapley \cite{GS62}, where agents of a two-sided market have strict preference orders on admissible agents on the other side. A matching is called popular if it does not lose a head-to-head election against any other matching, i.e., it is a Condorcet winner in the election among matchings. It is a well-known fact (sometimes called the Condercet paradox) that a weak Condorcet winner does not always exist in the general voting setting. Remarkably, existence is guaranteed in the marriage model: G\"ardenfors \cite{gardenfors1975match} showed that every stable matching is popular. In fact, stable matchings are the smallest popular matchings, so the notion of popular matching can be considered as a relaxation of stable matching, where we sacrifice pairwise stability in order to achieve larger size.

Several years after the results of G\"ardenfors, popular matchings came into the focus again in the 2000s due to their interesting algorithmic properties. Huang and Kavitha \cite{huang2011popular} showed that a maximum size popular matching in the marriage model can be found in polynomial time. In contrast, recently it was shown by Faenza et al.~\cite{faenza2019popular} and Gupta et al.~\cite{gupta2019popular,gupta2021popular} (simultaneously and independently) that deciding the existence of a popular matching in the roommates (i.e., non-bipartite) model is NP-complete.

Just as in the case of the stable marriage problem, the results have been extended to many-to-many matchings. The concept of Condorcet winner is not so straightforward in this setting, because there are several different ways in which an agent can compare two matchings based on the sets of partners. Nonetheless, remarkable findings by Brandl and Kavitha \cite{brandl2018popular,brandl2019two} show that popular many-to-many matchings exist under a rather restrictive definition of popularity, and furthermore, the largest such matching has the maximum size even among matchings satisfying a much less restrictive notion of popularity.


Nasre and Rawat \cite{nasre2017popularity} introduced a many-to-many model where agents can have classifications in their preference lists, and classes can have upper quotas. Kamiyama \cite{kamiyama2020popular} proposed a far-reaching generalization of this model, by extending the laminar nested classification of Nasre and Rawat to a model involving arbitrary matroids. He gave an algorithm that returns a popular matching, based on Fleiner's algorithm for finding a matroid kernel \cite{fleiner2001matroid,fleiner2003fixed}, which is a generalization of the notion of stable matching to matroid intersection. 
For the maximum size popular matching problem, however, Kamiyama could only prove the correctness of his algorithm for the special case of weakly base orderable matroids. This includes the problem of Nasre and Rawat, but does not include graphic matroids for example. He left open the question whether there is a polynomial-time algorithm for arbitrary matroids. 

In this paper, we give an affirmative answer to this question. We show that \emph{the maximum popular matching problem with two-sided preferences and arbitrary matroid constraints can be solved in polynomial time}, by essentially the same algorithm as in \cite{kamiyama2020popular}. The key tool in extending the proof from weakly base orderable matroids to arbitrary matroids is a new matroid exchange property that can be formulated in terms of voting between two independent sets in a matroid with a given linear order on the elements (an \emph{ordered matroid}). This exchange property will be described in Theorem \ref{thm:nonpos}, and we will prove it by combining matroid techniques with the duality between weighted bipartite matching and minimum cover.
Since this exchange property may potentially be useful in other areas of matroid theory too, we include an equivalent form of the result here that can be easily understood in purely matroid theoretic terms.

\begin{theorem}\label{thm:maxmin}
   Let $M=(S, \cI)$ be a matroid, let $\succ$ be a linear order on $S$, and let $A$ and $B$ be disjoint bases of $M$. For each $ab \in A \times B$, let $w(ab)=1$ if $a\prec b$ and $w(ab)=0$ if $a \succ b$. Let $E_A=\{ab\in A \times B:\, A-a+b \in \cI\}$ and $E_B=\{ab\in A \times B:\, B+a-b \in \cI\}$. Then,
   \[\max\{w(N):\, \text{$N$ is a perfect matching in $E_A$}\} \geq \min\{w(N'):\, \text{$N'$ is a perfect matching in $E_B$}\}.\]
\end{theorem}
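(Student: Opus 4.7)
The plan is to combine LP duality for weighted bipartite perfect matching with a matroid exchange argument. Since matchings only involve $A$ and $B$, I may restrict $M$ to $A\cup B$ and assume $S=A\cup B$ with $|A|=|B|=r$. Write $E^0=\{ab\in A\times B:a\succ b\}$ for descent edges and $E^1 = A\times B \setminus E^0$. Since $|N|=r$ and $w(N)=|N\cap E^1|$ for every perfect matching, the inequality $\max w(N)\ge \min w(N')$ is equivalent to
\[
\max_{N'\text{ PM in }E_B}|N'\cap E^0|\;\ge\;\min_{N\text{ PM in }E_A}|N\cap E^0|,
\]
so one must exhibit a PM $N\in E_A$ and a PM $N'\in E_B$ having at least as many descent edges in $N'$ as in $N$.

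By K\"onig--Egerv\'ary and the total dual integrality of the bipartite perfect matching polytope, the right-hand side admits an integer optimal dual potential $y:A\cup B\to\Zset$ with $y_a+y_b\le\mathbf{1}_{E^0}(ab)$ for every $ab\in E_A$, achieving value $\sum_v y_v$ equal to the RHS. The left-hand side admits an integer optimal $z:A\cup B\to\Zset$ with $z_a+z_b\ge\mathbf{1}_{E^0}(ab)$ for every $ab\in E_B$, which by K\"onig reduces to a minimum vertex cover of the bipartite subgraph $E_B\cap E^0$. The remaining task is thus to show that the minimum cover on the $E_B$-side dominates any optimal potential on the $E_A$-side.

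For the matroid step I would invoke the Greene--Magnanti simultaneous multiple basis exchange---for every $A'\subseteq A$ there exists $B'\subseteq B$ with $|B'|=|A'|$ such that $(A\setminus A')\cup B'$ and $(B\setminus B')\cup A'$ are both bases of $M$---applied to the set $A'$ carved out by the support of the optimal dual $y$ (and dually by the minimum cover), together with the linear order $\succ$. This coupled global swap yields two bases from which candidate matchings in $E_A$ and $E_B$ can be extracted. The main obstacle is that Greene--Magnanti only supplies a global exchange, not the pointwise single-element exchanges one needs for perfect matchings in $E_A$ and $E_B$; since general matroids (for instance graphic matroids) are not strongly base orderable, a bijective simultaneous exchange is in general unavailable.

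This is precisely the gap that the new exchange property (Theorem \ref{thm:nonpos}) is designed to fill. The heart of the proof will be to use that property to refine the global multi-exchange into concrete perfect matchings in $E_A$ and $E_B$ with the required descent counts, weaving the linear order $\succ$ into the matroid closure structure so that the refined single-element exchanges respect weight in the required direction. Once this refinement is established, the numerical comparison between the LP dual $\sum_v y_v$ and the minimum cover follows from standard bipartite matching duality, completing the argument.
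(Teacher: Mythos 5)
Your LP-duality scaffolding is sound and in fact matches the paper's first step: the paper also takes an optimal integral dual potential $\pi$ for the maximum-weight perfect matching problem on $E_A$ (so $\sum_{v}\pi(v)=\max\{w(N)\}$ and $\pi(a)+\pi(b)\ge w(ab)$ for every $ab\in E_A$) and then aims to exhibit a perfect matching in $E_B$ of weight at most $\sum_{v}\pi(v)$. But the combinatorial heart of the argument is missing from your proposal, and the way you propose to fill it is circular: Theorem~\ref{thm:nonpos} is \emph{equivalent} to Theorem~\ref{thm:maxmin} --- the paper notes that the disjoint-base case of Theorem~\ref{thm:nonpos} is exactly Theorem~\ref{thm:maxmin} and that the general case reduces to it --- so you cannot invoke it as a black box when proving Theorem~\ref{thm:maxmin}. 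Your alternative matroid ingredient, the Greene--Magnanti multiple exchange, supplies only a set-level swap making $(A\setminus A')\cup B'$ and $(B\setminus B')\cup A'$ bases; as you yourself concede, this does not yield the element-by-element exchanges that constitute perfect matchings in $E_A$ and $E_B$, and you offer no mechanism for extracting them. The proposal therefore stops precisely where the proof has to begin.

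What the paper actually does at this point is the following. It defines the subgraph $E\subseteq E_B$ consisting of the edges $ab$ with $\pi(a)+\pi(b)\ge w(ab)$, i.e., the edges of $E_B$ on which the dual potential coming from the $E_A$-side is still feasible, and proves that $E$ contains a perfect matching (Lemma~\ref{lem:perfect}); any such matching $N'$ then satisfies $w(N')\le\sum_{v}\pi(v)=\max\{w(N)\}$ by weak duality, which is exactly the claim. The existence of the perfect matching in $E$ is established by verifying Hall's condition via the strong circuit axiom: one introduces a refined order $\succ'$ (compare by $\pi$ first, break ties by $\succ$), shows that for any circuit $C$ with $C\cap A$ contained in a putative deficient set $X$, the $\succ'$-worst element of $C\cap B$ must lie in the neighbourhood $Y$ of $X$, and then derives a contradiction with $|Y|<|X|$ by choosing for each $u\in X$ a circuit optimizing this worst element and applying the strong circuit axiom to two circuits that share it. Nothing in your proposal plays the role of this lemma, and without it (or some substitute) the argument does not go through.
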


We note that the special case when $A$ is an optimal base with respect to the ordering $\succ$ was proved previously in \cite{csaji2022approximation} by the present authors.


The precise definitions for various notions of popularity will be given in Section \ref{sec:popularity}; here we only give a high-level idea. 
We present our results in the framework of matroid intersection, which is equivalent to Kamiyama's model, but allows us to better describe the difference between the more restrictive and less restrictive popularity notions. It is also closer to the original matroid kernel problem defined by Fleiner \cite{fleiner2001matroid}. In our framework, two ordered matroids are given on the same ground set $S$, both as direct sums: $M_1= M^1_1 \oplus M^1_2 \oplus \dots \oplus M^1_{k_1}$ and $M_2= M^2_1 \oplus M^2_2 \oplus \dots \oplus M^2_{k_2}$. Agents correspond to the summands in the direct sums, so there are $k_1+k_2$ agents, each corresponding to a matroid $M^i_j=(S^i_j, \cI^i_j)$. 

To get a pairwise comparison of two common independent sets $X$ and $Y$, the agent corresponding to $M^i_j$ compares $X\cap S^i_j$ and $Y\cap S^i_j$ based on the ordering of the elements of $S^i_j$. The details of this are described in Section \ref{sec:popularity}; here we just note that an agent may cast multiple votes based on how the elements of  
$(X\setminus Y)\cap S^i_j$ and $(Y\setminus X) \cap S^i_j$ can be paired to each other. The votes of all agents are added to obtain the total vote between $X$ and $Y$.

The common independent set $X$ is popular if there is no common independent set $Y$ that gets more votes in such a pairwise comparison. We show that a maximum size popular common independent set can be found efficiently. Furthermore, we prove a property similar to the one by Brandl and Kavitha mentioned above: there always exists a common independent set satisfying a remarkably restrictive definition of popularity that has the maximum size among all common independent sets satisfying weaker popularity properties.  

We also investigate another notion of popularity, called lexicographic popularity. Here, each agent has only one vote, and the agents compare common independent sets in a lexicographic way. 
Lexicographic preferences have been of considerable interest recently, as they arise in many applications. Cechlárová et al. \cite{cechlarova2014pareto} studied Pareto-optimal matchings in the many-to-many matching problem with lexicographic one-sided preferences. Biró and Csáji \cite{biro2022strong} investigated the strong core and Pareto optimality with two-sided lexicographic preferences. Closest to our work is the paper of Paluch \cite{paluch2014popular}, which studied popular and clan-popular matchings in the many-to-one matching problem with one-sided lexicographic preferences. 
We show that, in contrast to the previous notion of popularity, a lexicographically popular common independent set does not always exist and both the search and verification questions regarding lexicographic popularity are coNP-hard, even in the restricted case of $b$-matchings with constant degrees and capacities.

The rest of the paper is structured as follows. In Section \ref{sec:kernel}, we describe the matroid kernel problem, and the relationship between matroid kernels and two-sided stable matching with matroid constraints. 
In Section \ref{sec:popularity}, we define the various notions of voting and popularity that we consider in the popular matroid intersection problem, and we describe their relationship to the popularity notions used in the literature on many-to-many matchings. We also present our new result on matroid exchanges, which is proved in Section~\ref{sec:nonpos}. In Section~\ref{sec:algorithm}, we describe the algorithm for the maximum size popular matroid intersection problem and the proof of its correctness. Finally, in Section \ref{sec:lex} we define lexicographic popularity and provide hardness results for the related search and verification problems. 

\section{Ordered matroids and matroid kernels}\label{sec:kernel}
A {\em matroid} is a pair $(S, \cI)$ of a finite set $S$ and a nonempty family $\cI\subseteq 2^S$ satisfying the following two axioms: (i) $A\subseteq B\in \cI$ implies $A\in \cI$, and (ii) for any $A, B\in \cI$ with $|A|<|B|$, there is $v\in B\setminus A$ with $A+v\in \cI$. A set in $\cI$ is called an {\em independent set}, and an inclusion-wise maximal one is called a {\em base}. By axiom (ii), all bases have the same size, which is called the {\em rank} of the matroid.

A {\em circuit} of a matroid is an inclusionwise minimal dependent set. The \emph{fundamental circuit} of an element $x\in S\setminus B$ for a base $B$ is the unique circuit in $B+x$. We will use the following well-known property.

\begin{proposition}[Strong circuit axiom]
If $C,C'$ are circuits, $x\in C\setminus C'$, and $y\in C \cap C'$, then there is a circuit $C'' \subseteq C \cup C'$ such that $x \in C''$ and $y \notin C''$. \qed 
\end{proposition}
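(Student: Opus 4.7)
The plan is a short proof by contradiction via the closure operator. First I would set $T := (C \cup C') \setminus \{y\}$ and suppose, for contradiction, that no circuit $C'' \subseteq T$ contains $x$. Note that $x \neq y$ because $x \in C \setminus C'$ while $y \in C'$. The non-existence hypothesis is equivalent, by the standard characterization ``an element $e$ lies in a circuit contained in a set $U \ni e$ iff $e \in \mathrm{cl}(U \setminus \{e\})$,'' to the statement $x \notin \mathrm{cl}(T \setminus \{x\})$. I would spend one line recalling why that characterization holds: any basis $I$ of $U \setminus \{e\}$ satisfies $\mathrm{cl}(I) = \mathrm{cl}(U \setminus \{e\})$, so $e \in \mathrm{cl}(U \setminus \{e\})$ forces $I + e$ to contain a unique circuit, which lives in $U$ and contains $e$.

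Next I would derive the contradictory inclusion $x \in \mathrm{cl}(T \setminus \{x\})$ directly from the two given circuits. Since $C'$ is a circuit through $y$, we have $y \in \mathrm{cl}(C' \setminus \{y\})$; and since $x \notin C'$, the set $C' \setminus \{y\}$ is contained in $T \setminus \{x\}$, so monotonicity of closure gives $y \in \mathrm{cl}(T \setminus \{x\})$. Combined with the evident inclusion $C \setminus \{x,y\} \subseteq T \setminus \{x\}$, this yields
\[
C \setminus \{x\} \;=\; (C \setminus \{x,y\}) \cup \{y\} \;\subseteq\; \mathrm{cl}(T \setminus \{x\}).
\]
Because $C$ is a circuit through $x$, we also have $x \in \mathrm{cl}(C \setminus \{x\})$, and one more application of monotonicity and idempotence of closure produces $x \in \mathrm{cl}(T \setminus \{x\})$, contradicting the earlier deduction.

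The whole argument is essentially bookkeeping about set inclusions with the closure operator; no induction, rank computation, or exchange augmentation is required. The only step that needs some care is verifying $C' \setminus \{y\} \subseteq T \setminus \{x\}$, which hinges on the hypothesis $x \notin C'$ — this is precisely the input that upgrades the weak circuit-elimination statement to the strong one. I do not anticipate any serious obstacle, as strong circuit elimination is classical and the plan above is effectively the textbook closure-based proof.
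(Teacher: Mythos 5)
Your proof is correct: the closure characterization you invoke (an element $e$ lies in a circuit inside $U$ iff $e \in \mathrm{cl}(U \setminus \{e\})$) is accurately stated and justified, and the chain $C' \setminus \{y\} \subseteq T \setminus \{x\}$, hence $y \in \mathrm{cl}(T\setminus\{x\})$, hence $C \setminus \{x\} \subseteq \mathrm{cl}(T\setminus\{x\})$, hence $x \in \mathrm{cl}(T\setminus\{x\})$ correctly uses the hypothesis $x \notin C'$ exactly where it is needed. The paper itself gives no proof --- it states the proposition with \verb|\qed| as a well-known property --- so there is nothing to diverge from; your argument is the standard textbook closure-operator proof and would serve as a valid substitute.
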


In our proofs in Section~\ref{sec:nonpos}, we will use the fact that matroids are closed under operations such as {\em direct sum}, {\em restriction}, {\em contraction}, and {\em truncation}. For these operations and other basics on matroids, we refer the reader to \cite{schrijver2003combinatorial}.

An \emph{ordered matroid} is a triple $(S,\cI,\succ)$ such that $(S, \cI)$ is a matroid and $\succ$ is a linear order on $S$. The linear order determines an optimal base in the following sense: for any weight vector $w\in \Rset^S$ which satisfies $w_x>w_y \Leftrightarrow x \succ y$, the unique maximum weight base is the same. We call this base $A$ the \emph{optimal base} of $(S,\cI,\succ)$; it is characterized by the property that $u \succ v$ for every $u \in A$ and $v\in S\setminus A$ for which $A-u+v \in \cI$. 



Let $M_1=(S,\cI_1,\succ_1 )$ and $M_2=(S,\cI_2,\succ_2 )$ be ordered matroids on the same ground set $S$, and let $I \in \cI_1 \cap \cI_2$ be a common independent set. We say that an element $v\in S\setminus I$ is {\em dominated} by $I$ in $M_i$ if $I+v \notin \cI_i$ and $u \succ_i v$ for every $u \in I$ for which $I-u+v \in \cI_i$. We call a common independent set $I \in \cI_1 \cap \cI_2$ an \emph{$(M_1,M_2)$-kernel} if every $v \in S \setminus I$ is dominated by $I$ in $M_1$ or $M_2$. If an element $v \in S \setminus I$ is dominated in neither $M_1$ nor $M_2$, we say that $v$ \emph{blocks} $I$.

It was shown by Fleiner \cite{fleiner2001matroid,fleiner2003fixed} that matroid kernels always exist and have
the same size -- in fact, they have the same span in both matroids. He also gave a matroidal version of the Gale--Shapley algorithm that finds an $(M_1,M_2)$-kernel efficiently, in $\mathcal{O}(|S|^2)$ time.

To understand the relation between our problem formulation and the formulation of Kamiyama \cite{kamiyama2020popular}, it is instructive to see the equivalence of the matroid kernel model above and the model of \emph{stable matchings with matroid constraints}, as described below. Let $G=(V_1,V_2;E)$ be a bipartite graph, and for each $v\in V_1 \cup V_2$, let $M_v=(\delta_G(v), \mathcal{I}_v, \succ_v)$ be an ordered matroid, where $\delta_G(v)$ denotes the set of edges incident to $v$. An edge set $I \subseteq E$ is called a \emph{matching} if $I\cap \delta_G(v) \in \mathcal{I}_v$ for every $v \in V_1 \cup V_2$. A matching $I$ is \emph{stable} if for any $e=v_1v_2\in E\setminus I$, either $I\cap \delta_G(v_1)$ is the optimal base of $M_{v_1}$ restricted to $(I+e)\cap \delta_G(v_1)$, or $I\cap \delta_G(v_2)$ is the optimal base of $M_{v_2}$ restricted to $(I+e)\cap \delta_G(v_2)$.

We show that this is actually equivalent to the matroid kernel model. To formulate stable matchings with matroid constraints as a matroid kernel problem, let $M_1$ be the matroid on ground set $E$ obtained as the direct sum of the matroids $M_{v}$ ($v\in V_1$), and let $\succ_1$ be obtained by arbitrarily extending the linear orders $\succ_v$ ($v\in V_1$) into a linear order on $E$. We define $M_2$ and $\succ_2$ similarly using $V_2$. It is easy to see that $(M_1,M_2)$-kernels are exactly the stable matchings. Conversely, a matroid kernel problem can be written as a stable matching problem with matroid constraints, where $G$ consists of two vertices and $|S|$ parallel edges between them.


\section{Voting and popularity in matroid intersection}\label{sec:popularity}

\subsection{Voting in ordered matroids}

For clarity of presentation, we use the word `pairing' instead of `matching' for a family of disjoint pairs of elements from two given disjoint subsets $A$ and $B$. Thus, a \emph{pairing between $A$ and $B$} is a matching in the complete bipartite graph with vertex classes $A$ and $B$, while a \emph{perfect pairing} is a perfect matching in the same graph. 

Consider an ordered matroid $M=(S, \cI, \succ)$, where $M$ is given as a direct sum $M=M_1 \oplus M_2 \oplus \dots \oplus M_k$ for some positive integer $k$ and matroids $M_j=(S_j,\cI_j)$ ($j \in [k]$). 
Given an ordered pair of independent sets $(I,J)$, let $N$ be a pairing between $I \setminus J$ and $J \setminus I$ and consider the following five conditions:
\begin{itemize}
\item[(1)] $I-u+v \in \cI$ for every $uv \in N$, where  $u\in I\setminus J$ and $v\in J\setminus I$.
\item[(2)] Any element $v\in J\setminus I$ with $I+v\not\in \cI$ is covered by $N$.
\item[(3)] Any element $u\in I\setminus J$ with $J+u\not\in \cI$ is covered by $N$.
\item[(4)] Every $uv\in N$ satisfies $u,v\in S_j$ for some $j \in [k]$.
\item[(5)] For every $j\in [k]$, the number of pairs of $N$ induced by $S_j$ is $\min\{|S_j \cap (I \setminus J)|, |S_j \cap (J \setminus I)|\}$. 

\end{itemize}
Intuitively, conditions (1), (4) and (5) mean that the agent corresponding to $M_j$ compares $I$ and $J$ by pairing the elements of $S_j \cap (I\setminus J)$ to elements of $S_j\cap (J \setminus I)$ with which they can be exchanged, and comparing each pair. Conditions (2) and (3) ensure that the pairing is reasonable when $S_j \cap (I\setminus J)$ and $S_j\cap (J \setminus I)$ have different sizes.

We say that $N$  is a {\em feasible pairing} for $(I,J)$ if (1)-(5) hold.
For two independent sets $I$ and $J$ and a feasible pairing $N$ for $(I,J)$, we define 
\[
    \vote(I,J,N)=|\{\,uv \in N: u\succ v\,\}|-|\{\,uv \in N: u\prec v\,\}|+|I|-|J|,
\label{eq:vote}
\]
where $u\in I\setminus J$ and $v\in J\setminus I$.
Considering the most adversarial feasible  pairing, we define
\begin{align*}
	\vote(I,J)&= \min\{\,\vote(I,J,N): N \text{~is a feasible pairing for $(I,J)$}\,\}.
\end{align*}

It turns out that the following property is crucial for proving the main results.

\begin{theorem}\label{thm:nonpos}
$\vote(I, J)+\vote(J,I)\leq 0$.
\end{theorem}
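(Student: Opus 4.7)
The plan is to reduce Theorem~\ref{thm:nonpos} to Theorem~\ref{thm:maxmin} through a short sequence of matroid operations. First, because conditions~(4) and (5) force any feasible pairing to be a disjoint union of pairings inside the summands $S_j$, the vote decomposes additively: $\vote(I,J)=\sum_j \vote_{M_j}(I\cap S_j, J\cap S_j)$ and similarly for $\vote(J,I)$. So it suffices to handle the case $k=1$, i.e.\ a single ordered matroid $M=(S,\cI,\succ)$. Then, contracting $I\cap J$ lets me assume $I\cap J=\emptyset$: all of the swap/addition tests in conditions~(1)--(3) translate correctly under this contraction, and the vote formula is unchanged.

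Assume without loss of generality that $|I|\le|J|$ and set $k:=|J|-|I|$. I will replace $M$ by the restriction-truncation $M':=(M|_{I\cup J})_{|J|}$. This does not alter the set of feasible pairings or the vote values, since every set tested in~(1)--(3) has size at most $|J|$ and condition~(3) is automatic (every feasible pairing has size $|I|$ and therefore covers all of $I\setminus J$). Inside $M'$, $J$ is a base. By matroid augmentation from $I$ into $J$, I pick $v_1,\dots,v_k\in J$ such that $\hat I:=I\cup\{v_1,\dots,v_k\}$ is a base of $M'$, and form the contraction $M'':=M'/\{v_1,\dots,v_k\}$. Then $I$ and $J':=J\setminus\{v_1,\dots,v_k\}$ are disjoint bases of $M''$, and Theorem~\ref{thm:maxmin} applies with $A=I$, $B=J'$ and $w$ as in the statement of that theorem: there exist a perfect matching $N^*$ in $E_A^{M''}$ maximising $w$ and one $N'^{*}$ in $E_B^{M''}$ minimising $w$, satisfying $w(N^*)\ge w(N'^{*})$.

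To finish, I lift $N^*$ and $N'^{*}$ to pairings $\tilde N$ for $(I,J)$ and $\tilde N'$ for $(J,I)$ in $M'$ by leaving $v_1,\dots,v_k$ unmatched. Feasibility is immediate: $uv\in E_A^{M''}$ means $\hat I-u+v\in\cI(M')$, hence $I-u+v\in\cI(M')$ (condition~(1)), and the uncovered $v_i$'s satisfy $I+v_i\in\cI(M')$ because $\hat I$ is independent (condition~(2)); the symmetric argument handles $\tilde N'$. Using $|N^*|=|N'^{*}|=|I|$ and $|I|-|J|=-k$, a short arithmetic gives
\[\vote(I,J,\tilde N)+\vote(J,I,\tilde N')\;=\;2\bigl(w(N'^{*})-w(N^*)\bigr)\;\le\;0,\]
and since $\vote(I,J)\le\vote(I,J,\tilde N)$ and $\vote(J,I)\le\vote(J,I,\tilde N')$, the theorem follows.

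The main obstacle I anticipate is this last reduction. A naive attempt to make $I$ a base by adjoining ``dummy'' free elements fails, because such elements are freely swappable with every $v\in J$ in the extended matroid; this would introduce spurious edges in $E_B$, and the minimum-weight matching guaranteed by Theorem~\ref{thm:maxmin} would then not necessarily correspond to a feasible pairing for $(J,I)$. The contraction trick above bypasses this by tying the exchange structure in $M''$ precisely to the original constraint $I+v_i\in\cI(M')$.
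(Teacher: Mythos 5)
Your reduction to the disjoint-bases case is correct, and in fact it closely mirrors the first half of the paper's own proof: the paper likewise restricts each summand to $I_j\cup J_j$, contracts $I_j\cap J_j$ together with an augmenting set $A_j$ (your $v_1,\dots,v_k$), and truncates so that the two sets become disjoint bases $I'$ and $J'$ of an auxiliary matroid $M'$; your checks that feasibility of pairings and the vote values survive these operations, and the final arithmetic $\vote(I,J,\tilde N)+\vote(J,I,\tilde N')=2\bigl(w(N'^{*})-w(N^{*})\bigr)$, are all sound. Your closing remark about why adjoining free dummy elements would fail is also a correct observation.

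The problem is that after the reduction you invoke Theorem~\ref{thm:maxmin} as a black box, and within this paper Theorem~\ref{thm:maxmin} has no independent proof: it is stated in the introduction precisely as an equivalent reformulation of the disjoint-bases case of Theorem~\ref{thm:nonpos} (see Remark~4; only the special case where $A$ is the $\succ$-optimal base was known previously). So your argument reduces the theorem to itself. The entire mathematical substance of the paper's proof lies in establishing the disjoint-bases inequality, and it is nontrivial: one takes a maximum-weight perfect matching in $E_I$ of weight $k$, extracts by LP duality an integral potential $\pi$ with $\sum_v\pi(v)=k$ and $\pi(u)+\pi(v)\ge w(uv)$ on $E_I$, and then proves (Lemma~\ref{lem:perfect}) that the edges of $E_J$ satisfying the same inequality $\pi(u)+\pi(v)\ge w(uv)$ already contain a perfect matching; verifying Hall's condition for this subgraph requires a delicate argument with the strong circuit axiom, eliminating elements of a circuit against the fundamental circuits for $I'$ and $J'$ under an ordering refined by $\pi$. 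None of this appears in your proposal, so as written the proof is incomplete: you would need to either supply a proof of Theorem~\ref{thm:maxmin} for two arbitrary disjoint bases, or replace the appeal to it by the duality-plus-circuit-exchange argument directly.
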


We present the proof in the next section.
Here, we make two remarks about the theorem.

\begin{remark}
The statement can be shown easily if the matroid $(S, \cI)$ is weakly base orderable. Indeed, the definition of weak base orderability implies the existence of a pairing $N$ that is feasible for both $(I, J)$ and $(J,I)$, from which we obtain $\vote(I, J)+\vote(I, J)\leq \vote(I, J, N)+\vote(J, I, N)=0$. For a general matroid, however, such a pairing $N$ may not exist, which makes it difficult to extend the proof arguments in previous works \cite{kamiyama2020popular,kavitha2014size} to general matroids. We use Theorem~\ref{thm:nonpos} to overcome this difficulty.
\end{remark}

\begin{remark}
It is easy to see that Theorem~\ref{thm:maxmin} corresponds to the case of Theorem~\ref{thm:nonpos} where $I=A$ and $J=B$ are disjoint bases. Indeed, in this case, the perfect matchings in $E_A$ are exactly the feasible pairings for $(A,B)$, while the perfect matchings in $E_B$ are exactly the feasible pairings for $(B,A)$. We will also see in the proof of Theorem~\ref{thm:nonpos} that the general case can be reduced to the case where $I$ and $J$ are disjoint bases, so the two theorems are actually equivalent.
\end{remark}

We also use a more restrictive notion of popularity, whose definition uses a broader class of pairings. The idea is that we can consider $M$ as a trivial direct sum $M=M_1$, i.e., we regard the $k$ agents as a single ``aggregate agent''. We also drop condition (3) from the definition of feasible pairing.
We say that $N$ is a {\em weakly feasible pairing} for $(I, J)$ if (1) and (2) hold.
For two independent sets $I$ and $J$ and a weakly feasible pairing $N$ for $(I,J)$, we can define $\vote(I,J,N)$ the same way as for feasible pairings. Considering the most adversarial weakly feasible  pairing,~we~define
\begin{align*}
	\weakvote(I,J)&= \min\{\,\vote(I,J,N): N \text{~is a weakly feasible pairing for $(I,J)$}\,\}.
\end{align*}



The following is an immediate corollary of Theorem \ref{thm:nonpos}.

\begin{corollary}\label{thm:popineq}
The following sequence of inequalities holds
for any pair of  independent sets $I$ and $J$:
$\weakvote(I,J) \leq \vote(I,J) \leq -\vote(J,I) \leq -\weakvote(J,I)$.   
\end{corollary}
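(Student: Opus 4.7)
The plan is to observe that the four-term chain decomposes into three independent inequalities, the middle one being Theorem~\ref{thm:nonpos} essentially verbatim, and the outer two being immediate from comparing the two pairing definitions.

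First I would verify the leftmost inequality $\weakvote(I,J)\le \vote(I,J)$. Inspecting the definitions, feasible pairings for $(I,J)$ are required to satisfy conditions (1)--(5), whereas weakly feasible pairings for $(I,J)$ are required to satisfy only (1) and (2). Hence every feasible pairing is weakly feasible, so the set of weakly feasible pairings contains the set of feasible pairings. Since $\vote(I,J,N)$ is defined by the same formula in both cases, taking the minimum over a larger set cannot increase the value:
\[
\weakvote(I,J)\;=\!\!\min_{N\text{ weakly feasible}}\!\!\vote(I,J,N)\;\le\;\min_{N\text{ feasible}}\vote(I,J,N)\;=\;\vote(I,J).
\]
The same argument applied with the roles of $I$ and $J$ swapped yields $\weakvote(J,I)\le \vote(J,I)$; negating this gives the rightmost inequality $-\vote(J,I)\le -\weakvote(J,I)$.

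For the middle inequality, I simply invoke Theorem~\ref{thm:nonpos}, which states that $\vote(I,J)+\vote(J,I)\le 0$, i.e.\ $\vote(I,J)\le -\vote(J,I)$. Concatenating the three inequalities yields the corollary.

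I do not anticipate any genuine obstacle here; the only substantive content lives in Theorem~\ref{thm:nonpos}, and once that is granted the corollary is a bookkeeping exercise. The only point worth double-checking is the set-containment between the two classes of pairings: conditions (1) and (2) do not reference the direct-sum decomposition $M=M_1\oplus\dots\oplus M_k$, so a feasible pairing automatically qualifies as weakly feasible regardless of whether the decomposition is the nontrivial one used to define $\vote$ or the trivial one used to define $\weakvote$.
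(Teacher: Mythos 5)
Your proposal is correct and matches the paper's (implicit) argument exactly: the paper states this as an immediate corollary of Theorem~\ref{thm:nonpos}, with the outer inequalities following from the containment of feasible pairings in weakly feasible pairings (minimizing over a larger set cannot increase the value) and the middle inequality being Theorem~\ref{thm:nonpos} rearranged. Your observation that conditions (1) and (2) do not depend on the direct-sum decomposition, so the containment holds despite the switch to the trivial decomposition, is the right point to check.
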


\subsection{Popularity in matroid intersection}

Let $M_1=(S, \cI_1, \succ_1)$ and $M_2=(S,\cI_2, \succ_2)$ be ordered matroids, given as direct sums $M_1= M^1_1 \oplus M^1_2 \oplus \dots \oplus M^1_{k_1}$ and $M_2= M^2_1 \oplus M^2_2 \oplus \dots \oplus M^2_{k_2}$.
For an ordered pair $(I,J)$ of common independent sets and $i\in \{1,2\}$, we define $\vote_i(I,J)$ as $\vote(I,J)$ with respect to $M_i$. We call a common independent set $I\in \cI_1\cap \cI_2$ {\em popular} if $\vote_1(I,J)+\vote_2(I,J) \geq 0$ for every $J\in \cI_1\cap \cI_2$. Also, we call $I\in \cI_1\cap \cI_2$ {\em defendable} if $\vote_1(J,I)+\vote_2(J,I) \leq 0$ for every $J\in \cI_1\cap \cI_2$.

\begin{remark}
It is important to remember that feasible pairings for $(I,J)$ are not the same as feasible pairings for $(J,I)$. When considering popularity of $I$, we compare it to $J$ by taking a feasible pairing for $(I,J)$ that is worst possible for $I$. In contrast, defendability of $I$ is determined by considering a feasible pairing for $(J,I)$ that is best possible for $I$.  
\end{remark}

By using $\weakvote_i$ instead of $\vote_i$, we can define a stronger version of popularity and a weaker version of defendability, which we call {\em super popularity} and {\em weak defendability}, respectively. Note that these do not depend on the given decompositions of $M_1$ and $M_2$ into direct sums. The relation between these notions can be derived from Theorem \ref{thm:nonpos}.

\begin{corollary}\label{cor:implication}
The following implications hold for any $I\in \cI_1\cap \cI_2$:
\[ \text{$I$ is super popular} \Rightarrow \text{$I$ is popular} \Rightarrow \text{$I$ is defendable} \Rightarrow \text{$I$ is weakly defendable}
\]
\end{corollary}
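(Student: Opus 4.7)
The plan is to obtain all three implications as immediate bookkeeping from Corollary~\ref{thm:popineq} applied coordinate-wise to $M_1$ and $M_2$. Recall that for each $i\in\{1,2\}$ and each ordered pair $(I,J)$ of common independent sets, Corollary~\ref{thm:popineq} gives the chain
\[
\weakvote_i(I,J) \leq \vote_i(I,J) \leq -\vote_i(J,I) \leq -\weakvote_i(J,I).
\]
Summing a suitable piece of this chain over $i=1,2$ handles each of the three implications in turn.

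First I would fix an arbitrary $J\in\cI_1\cap\cI_2$ and show super popular $\Rightarrow$ popular: the leftmost inequality of the chain, summed over $i$, yields $\vote_1(I,J)+\vote_2(I,J)\geq \weakvote_1(I,J)+\weakvote_2(I,J)\geq 0$, where the final inequality is super popularity of $I$.

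Next, for popular $\Rightarrow$ defendable, I would use the middle inequality. For any $J$, summing $\vote_i(I,J)\leq -\vote_i(J,I)$ over $i=1,2$ gives $\vote_1(J,I)+\vote_2(J,I)\leq -\bigl(\vote_1(I,J)+\vote_2(I,J)\bigr)\leq 0$, the last step being popularity of $I$. Finally, for defendable $\Rightarrow$ weakly defendable, I would apply the leftmost inequality with the roles of $I$ and $J$ swapped: $\weakvote_i(J,I)\leq \vote_i(J,I)$, summed and combined with defendability of $I$, gives $\weakvote_1(J,I)+\weakvote_2(J,I)\leq \vote_1(J,I)+\vote_2(J,I)\leq 0$.

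There is no genuine obstacle here once Corollary~\ref{thm:popineq} is in hand; the entire substance of the result is the middle inequality $\vote(I,J)\leq -\vote(J,I)$, which is exactly Theorem~\ref{thm:nonpos}. The present corollary is essentially a packaging of that fact, together with the obvious $\weakvote\leq \vote$ that follows because every feasible pairing is also weakly feasible (so the minimum defining $\weakvote$ is taken over a larger family and can only be smaller).
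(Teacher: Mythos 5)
Your proof is correct and follows essentially the same route as the paper: both derive all three implications by summing the chain of Corollary~\ref{thm:popineq} over $i=1,2$. The only cosmetic difference is that the paper writes the summed chain as a single displayed inequality and reads off the implications at once, while you peel off each link separately; the substance is identical.
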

\begin{proof}
It follows from Corollary \ref{thm:popineq} that
\begin{multline*}
\weakvote_1(I,J)+\weakvote_2(I,J) \leq \vote_1(I,J)+\vote_2(I,J) \\ \leq -\vote_1(J,I)-\vote_2(J,I) \leq -\weakvote_1(J,I)-\weakvote_2(J,I)
\end{multline*}
for any $J\in \cI_1\cap \cI_2$. This gives the required implications.
\end{proof}

In Section \ref{sec:nonpos}, we prove Theorem \ref{thm:nonpos}.
In Section \ref{sec:algorithm}, we show that an abstract version of Kamiyama's algorithm \cite{kamiyama2020popular} outputs a common independent set that is super popular, and largest among all weakly defendable common independent sets. This generalizes several results in previous works. In Kamiyama's model \cite{kamiyama2020popular}, feasible pairings are defined by 
conditions (1),(2),(4), and (5). Then, our result shows that the algorithm's output is a largest popular common independent set also in his definition. In the partition matroid case (i.e., $b$-matching case) studied by Brandl and Kavitha \cite{brandl2018popular}, our popularity notion coincides with their popularity and our defendability coincides with their weak popularity. Therefore, our result generalizes the result of Brandl and Kavitha \cite{brandl2018popular} that we can efficiently find a popular matching that is largest among all weakly popular matchings.


\section{Proof of Theorem \ref{thm:nonpos}}\label{sec:nonpos}

Recall that $M=(S, \cI, \succ)$ is an ordered matroid, given as a direct sum $M=M_1 \oplus M_2 \oplus \dots \oplus M_k$ for some positive integer $k$ and matroids $M_j=(S_j,\cI_j)$ ($j \in [k]$). 
 Let $I\in \cI$ and $J\in \cI$ be arbitrary independent sets. Our aim is to prove that $\vote(I,J)+\vote(J,I) \leq 0$.

For $j\in [k]$, let $I_j\coloneqq S_j\cap I$ and $J_j\coloneqq S_j\cap J$. 
If $|I_j|\leq |J_j|$, then let $A_j\subseteq J_j\setminus I_j$ be a set satisfying $I_j\cup A_j\in \cI_j$ and $|A_j|=|J_j|-|I_j|$, and set $I'_j\coloneqq I_j\setminus J_j$ and $J'_j\coloneqq J_j\setminus (I_j\cup A_j)$.
If $|I_j|>|J_j|$, then define $I'_j$ and $J'_j$ similarly by exchanging the roles of $I_j$ and $J_j$. 
In any case, we have $|I'_j|=|J'_j|=\min\{|S_j \cap (I \setminus J)|, |S_j \cap (J \setminus I)|\}$.
Let $M'_j$ be the matroid obtained by restricting $M_j$ to $I_j\cup J_j$, contracting $(I_j\cap J_j)\cup A_j$, and truncating to the size of $|I'_j|$.
The ground set of $M'_j$ is partitioned into two bases $I'_j$ and $J'_j$.
Let $M'=(S', \cI')$ be the direct sum $M'_1 \oplus \dots \oplus M'_{k}$. 
The ground set $S'$ of $M'$ is partitioned into two bases $I'\coloneqq I'_1\cup\cdots\cup I'_{k}$ and $J'\coloneqq J'_1\cup \cdots\cup J'_{k}$.

Let $G_I=(I',J';E_I)$ be the bipartite graph with 
$E_I=\{\,uv: u\in I', v\in J', I'-u+v\in \cI' \,\}$, and let
$G_J=(I',J';E_J)$ where $E_J=\{\,uv: u\in I', v\in J', J'+u-v\in \cI' \,\}$.
Since $I'$ and $J'$ are bases of $M'$, each of $G_I$ and $G_J$ admits a perfect matching.

\begin{claim}
Any perfect matching of $G_I$ is a feasible pairing for $(I,J)$, and any perfect matching of $G_J$ is a feasible pairing for $(J,I)$.
\end{claim}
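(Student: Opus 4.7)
The plan is to verify, one by one, the five conditions defining a feasible pairing for an arbitrary perfect matching $N$ of $G_I$. I would view $N$ as a set of pairs between elements of $I\setminus J$ and $J\setminus I$, which is legitimate because the construction ensures $I'\subseteq I\setminus J$ and $J'\subseteq J\setminus I$. The statement for $G_J$ and $(J,I)$ would then follow by the symmetric argument obtained by exchanging the roles of $I$ and $J$.

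First I would dispatch conditions (4) and (5) using the direct-sum structure of $M'$. Since $I'_j$ and $J'_j$ are both bases of $M'_j$, they have the same cardinality, and $I'_{j'}+v\notin \cI'_{j'}$ for every $v\in J'_{j'}$; consequently an edge of $E_I$ joining $u\in I'_j$ to $v\in J'_{j'}$ with $j\neq j'$ would force $(I'-u+v)\cap S_{j'}=I'_{j'}+v\notin \cI'_{j'}$, contradicting $I'-u+v\in \cI'$. This yields (4), after which (5) is immediate because a perfect matching of $G_I$ must contain exactly $|I'_j|=|J'_j|=\min\{|S_j\cap(I\setminus J)|,|S_j\cap(J\setminus I)|\}$ pairs inside each $S_j$.

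Next I would handle condition (1) by unwinding the definition of $M'_j$. For $uv\in N$ with $u,v\in S_j$, the relation $I'_j-u+v\in \cI'_j$ translates, via the fact that truncation is inert on sets of size $|I'_j|$ and via the definition of contraction, into $(I'_j-u+v)\cup(I_j\cap J_j)\cup A_j\in \cI_j$. In either case $|I_j|\leq|J_j|$ or $|I_j|>|J_j|$, this independent set contains $I_j-u+v$, so heredity gives $I_j-u+v\in \cI_j$, and combined with $(I-u+v)\cap S_{j'}=I_{j'}\in \cI_{j'}$ for $j'\neq j$ this proves $I-u+v\in \cI$.

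For conditions (2) and (3), I would use the concrete description of $J'_{j'}$ and $A_{j'}$. Take $v\in J\setminus I$ with $I+v\notin \cI$, so $v\in J_{j'}\setminus I_{j'}$ and $I_{j'}+v\notin \cI_{j'}$ for some $j'$. If $|I_{j'}|\leq|J_{j'}|$ then $J_{j'}\setminus I_{j'}=J'_{j'}\sqcup A_{j'}$, and $v\in A_{j'}$ is impossible because then $I_{j'}+v\subseteq I_{j'}\cup A_{j'}\in \cI_{j'}$; if $|I_{j'}|>|J_{j'}|$ then $J'_{j'}=J_{j'}\setminus I_{j'}$ already contains $v$. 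Hence $v\in J'$ and is covered by the perfect matching $N$, and condition (3) follows by the symmetric argument on the $I\setminus J$ side. The only real obstacle throughout the proof is bookkeeping: correctly translating independence in the restricted/contracted/truncated matroid $M'_j$ back to independence in $M_j$, and correctly placing the ``troublesome'' elements of $J\setminus I$ witnessing (2) inside $J'$ rather than $A_{j'}$.
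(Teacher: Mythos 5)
Your proposal is correct and follows essentially the same route as the paper: verify conditions (1)--(5) for a perfect matching of $G_I$ by unwinding the restriction/contraction/truncation defining $M'_j$, locate the uncovered elements of $J\setminus I$ inside the sets $A_j$, and invoke symmetry for $G_J$. The only difference is that you spell out in more detail why edges of $E_I$ cannot cross between summands (for conditions (4) and (5)), a point the paper treats as immediate from $I'_j$ and $J'_j$ being bases of $M'_j$.
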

\begin{proof}
By symmetry, it is enough to prove the first statement. Let $N$ be a perfect matching in $G_I$. By definition, $N$ is a pairing between $I \setminus J$ and $J \setminus I$. We show that $N$ satisfies conditions (1)-(5).

To see (1), consider $uv \in N$ such that $u \in I'_j$ and $v \in J'_j$. Then $uv \in E_I$ implies that $I'_j-u+v$ is independent in $M'_j$, so $(I\cap S_j)-u+v$ is independent in $M_j$ by the construction of $M'_j$. Since this holds for any $j\in [k]$, $I-u+v\in \cI$.

To show (2), consider an element $v\in J\setminus I$ not covered by $N$. Then $v\in A_j$ for some $j$ such that $|J_j|>|I_j|$. By definition, $A_j \cup I_j$ is independent in $M_j$, so $I_j+v\in \cI_j$. By the definition of direct sum, we then have $I+v\in \cI$. Condition (3) can be shown similarly, by exchanging the role of $I$ and $J$.

Conditions (4) and (5) are satisfied because $N$ gives a perfect pairing between  $I'_j$ and $J'_j$ for every $j \in [k]$.
\end{proof}

For $uv \in E_I$, let $w(uv)=1$ if $u\prec v$, and $w(uv)=0$ if $u \succ v$, and let $k$ be the maximum weight of a perfect matching in $E_I$. Then $\vote(I,J)\leq |I'|-2k$. 
By duality (between maximum weight perfect matching and minimum cover), 
there exists an integer function $\pi$ on $S'$
such that $\sum_{v\in S'}\pi(v)=k$ and
$\pi(u)+\pi(v)\geq w(uv)$ for every $uv \in E_I$.

We now consider the same weight function on $E_J$: let $w(uv)=1$ if $u\prec v$, and $w(uv)=0$ if $u \succ v$. 
Let $E$ consist of the edges $uv \in E_J$ which satisfy $\pi(u)+\pi(v)\geq w(uv)$.

\begin{lemma}\label{lem:perfect}
The bipartite graph $G=(I',J';E)$ has a perfect matching.
\end{lemma}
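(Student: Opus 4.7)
To prove Lemma \ref{lem:perfect}, I plan to verify Hall's marriage condition for $G=(I',J';E)$. Assume for contradiction that Hall fails: there exist $X\subseteq I'$ and $Y:=J'\setminus N_G(X)\subseteq J'$ with $|X|+|Y|>r:=|I'|$ such that every edge $uv\in E_J$ with $u\in X$ and $v\in Y$ satisfies $\pi(u)+\pi(v)\leq w(uv)-1$.

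The starting observation is structural. Let $M^*$ be an optimal perfect matching in $E_I$ of weight $k$. By a counting argument on $M^*$, $|M^*\cap(X\times Y)|\geq|X|+|Y|-r\geq 1$. For each $(u,v)\in M^*\cap(X\times Y)$ one has $(u,v)\in E_I$, tightness $\pi(u)+\pi(v)=w(uv)$ (complementary slackness), yet $(u,v)\notin E_J$: if it were, the bad inequality $\pi(u)+\pi(v)\leq w(uv)-1$ would contradict tightness. Thus $M^*\cap(X\times Y)\subseteq E_I\setminus E_J$, which already reveals a strong structural interaction between $M^*$ and the hypothetical bad pair $(X,Y)$.

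To derive the contradiction, I plan to modify $M^*$ into a perfect matching $\tilde M$ in $E_I$ of weight strictly greater than $k$. Since $|X|+|Y|>r$, Brualdi's symmetric base exchange applied to the disjoint bases $I'$ and $J'$ yields a subset $Y^*\subseteq J'$ with $|Y^*|=|X|$, a perfect matching in $E_J$ between $X$ and $Y^*$, and $|Y^*\cap Y|\geq |X|+|Y|-r\geq 1$. The edges of this $E_J$-matching that land in $X\times Y$ are ``bad'' in that each has $w(uv)-\pi(u)-\pi(v)\geq 1$. The plan is to splice these bad $E_J$-edges together with appropriate $E_I$-edges of $M^*$ along an alternating cycle in $E_I\cup E_J$ --- constructed by iterated applications of the strong circuit axiom in $M'$ and symmetric base exchanges --- so that the resulting edge set $\tilde M$ is a valid perfect matching in $E_I$ and each incorporated bad $E_J$-edge contributes at least $1$ extra unit of weight, yielding $w(\tilde M)>k$ and the desired contradiction.

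The main obstacle is the splicing step: without global base-orderability of $M'$, a single symmetric exchange does not suffice to convert a bad $E_J$-edge into a compatible $E_I$-substructure, so the alternating cycle must be built iteratively, tracking matroid independence via the strong circuit axiom and ensuring all weight shifts aggregate in the intended direction. This iterated exchange is precisely where the generalization from weakly base-orderable matroids (handled in previous work) to arbitrary matroids demands new matroidal insight, and is the technical core of the lemma's proof.
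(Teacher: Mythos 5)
Your setup is fine: testing Hall's condition, the complementary-slackness observation that every edge of an optimal $E_I$-matching $M^*$ is tight for $\pi$, and the counting argument showing that $M^*$ matches at least $|X|+|Y|-r\geq 1$ elements of $X$ into $Y$ are all correct. But there is a genuine gap, and moreover the contradiction you are aiming for is unreachable in principle. You want to build a perfect matching $\tilde M$ in $E_I$ with $w(\tilde M)>k$. No construction can achieve this: since $\pi(u)+\pi(v)\geq w(uv)$ holds for \emph{every} edge of $E_I$, every perfect matching $N\subseteq E_I$ satisfies $w(N)\leq\sum_{v\in S'}\pi(v)=k$. In particular, a ``bad'' $E_J$-edge, which satisfies $\pi(u)+\pi(v)\leq w(uv)-1$, cannot lie in $E_I$ at all, so it can neither appear in $\tilde M$ nor ``contribute at least $1$ extra unit of weight'' to it --- the weight of an edge depends only on $\succ$, and there is no mechanism by which the surplus of an $E_J$-edge transfers to an $E_I$-matching that does not contain it. On top of this, the entire technical content of the argument is the ``splicing step'' that you explicitly flag as the main obstacle and leave unexecuted, so even setting aside the unsound target, the proof is not there.

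For comparison, the paper's proof never uses $M^*$ or the optimality of $k$; it uses only the feasibility of $\pi$ as a cover of $E_I$ together with the circuit structure of $M'$. Given a Hall violator $X\subseteq I'$ with neighborhood $Y$, $|Y|<|X|$, it introduces an auxiliary order $\succ'$ on $S'$ (primarily by increasing $\pi$, ties broken by $\succ$), shows via the strong circuit axiom that for every circuit $C$ of $M'$ with $C\cap I'\subseteq X$ the $\succ'$-worst element of $C\cap J'$ lies in $Y$, and then, choosing for each $u\in X$ a circuit $C_u$ whose $\succ'$-worst element in $Y$ is best possible, reaches a contradiction by pigeonhole on $|Y|<|X|$ and one further application of the strong circuit axiom. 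If you want to salvage your approach, the contradiction has to be found in the circuit structure of $M'$ relative to $\pi$, not in the weight of an $E_I$-matching.
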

\begin{proof}
In the proof, we work with the matroid $M'$, so the term `circuit' refers to circuits of $M'$. Suppose for contradiction that there exists a subset $X$ of $I'$
such that the set of its neighbors in $G$, that we denote by $Y$, is
smaller than $X$. We introduce a new ordering $\succ'$ on the elements of $S'$:
$a \succ' b$ if either $\pi(a)<\pi(b)$, or $\pi(a)=\pi(b)$ and $a \succ b$ (we will only compare pairs
inside $I'$ or inside $J'$). The following claim is the main ingredient of the proof.

\begin{claim} 
Let $C$ be a circuit of $M'$ such that $C\cap I' \subseteq X$, and let $v$ be
the worst element of  $C \cap J'$ according to $\succ'$. Then $v \in Y$.
\end{claim}
\begin{proof} 
Suppose for contradiction that $v \notin Y$. There must exist a
vertex $u \in C\cap X$ such that $uv \in E_J$. Indeed, otherwise we could
eliminate the elements of $C\cap X$ one by one using the strong circuit
axiom with the fundamental circuits for $J'$, while retaining the property
that $v$ is in the circuit; in the end, we would obtain a circuit inside $J'$,
which is impossible. So, there is a vertex $u \in C\cap X$ such that $uv\in E_J$.
Let us call a vertex $v' \in J'$ \emph{bad} if either $\pi(u)+\pi(v')<0$, or
$\pi(u)+\pi(v')=0$ and $u \prec v'$. The vertex $v$ is bad because $uv \in E_J$ and $v
\notin Y$. Since $v$ is the  worst element of  $C \cap J'$ according to $\succ'$,
we have that every $v'\in C \cap J'$ is bad. Thus, $uv' \notin E_I$ holds for
every $v' \in C \cap J'$ (since $\pi(u)+\pi(v')\geq w(uv')$ if $uv' \in E_I$).
In other words, $u$ is not in the fundamental circuit of $v'$ for $I'$ for any
$v' \in C \cap J'$. But then we could eliminate the elements of $C\cap J'$
one by one using the strong circuit axiom with the fundamental circuits
for $I'$, while retaining the property that $u$ is in the circuit; in the end
we would obtain a circuit inside $I'$, which is impossible. This
contradiction proves the claim.
\end{proof}
We now show that $G$ has a perfect matching by getting a contradiction. For each $u \in X$, let $C_u$ be a circuit such that $C\cap
I' \subseteq X$, $u$ is the worst element of $C\cap X$ according to $\succ'$, and
subject to that, the worst element in $C\cap Y$ according to $\succ'$ is best
possible. Note that $C_u$ exists, because the fundamental circuit of $u$ for $J'$ is a candidate, and each candidate circuit has an element in $C\cap Y$ by the previous claim. 

Let $y(u)$ denote the worst element in $C_u\cap Y$ according to
$\succ'$. Since $|Y|<|X|$, there exist $u_1\in X$ and $u_2 \in X$ such that
$y(u_1)=y(u_2)$; we may assume $u_1\prec'u_2$. Let $y=y(u_1)=y(u_2)$; notice
that $y \in C_{u_1} \cap C_{u_2}$ and $u_1 \in C_{u_1} \setminus C_{u_2}$.
By the strong circuit axiom, we can obtain a circuit $C$ such that $C
\subseteq C_{u_1} \cup C_{u_2}-y$, and $u_1 \in C$. The existence of this
circuit contradicts the choice of $C_{u_1}$.
\end{proof}

To prove Theorem~\ref{thm:nonpos}, consider the perfect matching $N\subseteq E_J$ given by Lemma~\ref{lem:perfect}. Then $w(N)\leq \sum_{v
\in S'} \pi(v)=k$, so $N$ has at most $k$ edges $uv$ for which $u \prec v$.
This means that $\vote(J,I)\leq 2k-|I'|$, and therefore $\vote(I,J)+\vote(J,I) \leq 0$.
This completes the proof of the theorem.

\section{Algorithm}\label{sec:algorithm}
Here we describe Kamiyama's algorithm \cite{kamiyama2020popular} in a generalized form.
Given a pair of ordered matroids $M_i=(S, \cI_i, \succ_i)~(i\in \{1,2\})$, we construct an extended instance $M^*_i=(S^*, \cI^*_i, \succ^*_i)$ $(i\in \{1,2\})$ obtained by replacing each element with two parallel copies.
Let the extended ground set be $S^*=\cup_{u\in S}\{ x(u), y(u)\}$. The elements $x(u)$ and $y(u)$ are respectively called {\em $x$-copy} of $u$ and {\em $y$-copy} of $u$.
The independent set families are defined by
\[\cI_i^*=\{\,I^* \subseteq S^*: \pi(I^*)\in \cI_i,\  |I^*\cap\{x(u),y(u)\}|\leq 1\ (\forall u \in S)\,\},\]
where $\pi(I^*)=\{\,u\in S: I^*\cap \{x(u),y(u)\} \neq \emptyset \,\}$. 

The linear order $\succ^*_i$ on $S^*$ is defined as follows. 
In $\succ^*_1$, the $x$-copy of any element is preferred over the $y$-copy of any element, and the original preferences are preserved for the copies of the same type (e.g.,  $u\succ_1 v\Leftrightarrow x(u)\succ^*_1 x(v),\, y(u)\succ^*_1 y(v)$). In $\succ^*_2$, the roles of $x$ and $y$ are exchanged; the $y$-copies are preferred over the $x$-copies, and the original preferences are preserved for the copies of the same type.
Kamiyama's algorithm is described as follows:
\begin{enumerate}
	\item Find an $(M^*_1, M^*_2)$-kernel $I^*$.
	\item Output $I\coloneqq \pi(I^*)$.
\end{enumerate} 
Note that we can find a matroid kernel $I^*$ in the first step in $\mathcal{O}(|S|^2)$ by Fleiner's algorithm \cite{fleiner2001matroid, fleiner2003fixed}.

Let $I$ be the output of the algorithm. 
We show that $I$ is super popular and largest among all weakly defendable common independent sets,
which implies that $I$ is a maximum popular common independent set by Corollary~\ref{cor:implication}.
To this end, we provide the following lemma.

\begin{lemma}\label{lem:I}
For any $J\in \cI_1\cap \cI_2$ and any weakly feasible pairings $N_1$ and $N_2$ for $(I, J)$ with respect to matroids $M_1$ and $M_2$, respectively, we have $\vote_1(I, J, N_1)+\vote_2(I,J,N_2)\geq 0$.
Moreover, if $|J|>|I|$, then $\vote_1(I, J, N_1)+\vote_2(I,J,N_2)>0$.
\end{lemma}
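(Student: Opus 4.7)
The plan is to lift the comparison to the doubled instance $(M_1^*, M_2^*)$, where $I^*$ is the $(M_1^*, M_2^*)$-kernel produced by the algorithm. Given $J \in \cI_1 \cap \cI_2$ and weakly feasible pairings $N_1, N_2$, I would construct a lift $J^* \subseteq S^*$ with $\pi(J^*) = J$ and corresponding pairings $N_1^*, N_2^*$ in the extended matroids, and then leverage the kernel property of $I^*$ to bound the vote.

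First I would catalogue the kernel's implications. Partitioning $I$ as $I^x \cup I^y$ with $I^x = \{u : x(u) \in I^*\}$ and $I^y = \{u : y(u) \in I^*\}$, the kernel property says that for every $v \in S \setminus I$, both $x(v)$ and $y(v)$ are dominated by $I^*$ in some $M_i^*$. Using the convention that $x$-copies rank above $y$-copies in $\succ_1^*$ and below them in $\succ_2^*$, I would unpack each domination clause into a constraint on the allowed exchanges with $I$: the \emph{same-type} dominations ($x(v)$ in $M_1^*$, or $y(v)$ in $M_2^*$) are the most restrictive, forcing every $u$ with $I-u+v\in \cI_i$ to lie in $I^x$ (resp.\ $I^y$) and to satisfy $u \succ_i v$, while the \emph{cross} dominations are weaker but still useful. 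A key immediate corollary is that no $v \in S \setminus I$ can satisfy $I + v \in \cI_1 \cap \cI_2$, since otherwise neither copy of $v$ could be dominated; hence by condition~(2) of weak feasibility, every $v \in J \setminus I$ is covered by at least one of $N_1, N_2$.

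Next I would build the lift and the induced pairings. For $v \in I \cap J$, place into $J^*$ the copy of $v$ that lies in $I^*$; these contribute nothing to $J^* \triangle I^*$. For $v \in J \setminus I$, the case analysis above dictates a choice between $x(v)$ and $y(v)$: the rule is that for each pair $uv \in N_i$, the lifted pair $u^* v^*$ (with $u^*$ the unique copy of $u$ in $I^*$) satisfies $I^* - u^* + v^* \in \cI_i^*$, and either inherits $u \succ_i v$ or is boosted to $u^* \succ_i^* v^*$ via the mixed $(x,y)$ copy-type rule. Consequently
\[ \vote_1(I^*, J^*, N_1^*) + \vote_2(I^*, J^*, N_2^*) \;\geq\; \vote_1(I, J, N_1) + \vote_2(I, J, N_2). \]
In the extended instance, every pair $u^* v^* \in N_i^*$ whose $v^* \in J^* \setminus I^*$ is dominated by $I^*$ in $M_i^*$ contributes automatically $+1$, and the no-uncovered-vertex corollary above ensures that enough such pairs exist to give $\vote_1(I^*, J^*, N_1^*) + \vote_2(I^*, J^*, N_2^*) \geq 0$, completing the non-strict part.

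For the strict case $|J| > |I|$, Fleiner's theorem guarantees that all $(M_1^*, M_2^*)$-kernels have size $|I^*|=|I|$, so $|J^*| > |I^*|$; the surplus elements in $J^* \setminus I^*$ force at least one extra dominated pair in the lifted count, giving a strict $+1$ on top of the non-strict bound. I expect the main obstacle to be the case analysis underlying the copy selection: up to four subcases per $v \in J \setminus I$ (according to which of $x(v), y(v)$ is dominated in which $M_i^*$) require distinct copy choices to keep the $N_i^*$ feasible and to activate the correct boosts. This is exactly where Theorem~\ref{thm:nonpos} is essential — for weakly base orderable matroids a single common pairing short-circuits the case analysis, but for arbitrary matroids the duality-based argument of Section~\ref{sec:nonpos} is needed to glue the subcases together.
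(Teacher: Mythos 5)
Your setup is on the right track in two respects: the observation that the kernel property forces every $v\in J\setminus I$ to be covered by $N_1$ or $N_2$, and the partition of $I$ into $x$-type and $y$-type elements according to which copy lies in $I^*$, are both ingredients of the paper's proof. But the core of your argument has two genuine problems. First, your displayed inequality points the wrong way: from $\vote_1(I^*,J^*,N_1^*)+\vote_2(I^*,J^*,N_2^*)\geq \vote_1(I,J,N_1)+\vote_2(I,J,N_2)$ together with $\vote_1(I^*,J^*,N_1^*)+\vote_2(I^*,J^*,N_2^*)\geq 0$ you cannot conclude $\vote_1(I,J,N_1)+\vote_2(I,J,N_2)\geq 0$; you would need the lifted vote to be a \emph{lower} bound on the original one. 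Second, the claim that each dominated lifted pair ``contributes automatically $+1$'' is false. For a single $v_\ell\in J\setminus I$ paired with $u_{\ell-1}$ in $N_1$ and $u_\ell$ in $N_2$, the domination of $x(v_\ell)$ and $y(v_\ell)$ only yields a disjunction (one of $u_{\ell-1}\succ_1 v_\ell$, $u_\ell\succ_2 v_\ell$ must hold) when $u_{\ell-1}$ and $u_\ell$ have the same type, and yields \emph{nothing} when $u_{\ell-1}$ is $x$-type and $u_\ell$ is $y$-type: in that case both comparisons may go against $I$ and the local contribution is $-2$. No pointwise argument over pairs can close this; the paper's proof decomposes $N_1\cup N_2$ into alternating paths and cycles and runs an amortized count along each component, where a $-2$ (type switching from $x$ to $y$) is always preceded by a compensating $+2$ (type switching from $y$ to $x$), with the endpoint conditions (forced by domination of uncovered copies) supplying the boundary terms and the strict surplus when $|J|>|I|$. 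This path/cycle amortization is the missing idea in your proposal.

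A smaller but telling misattribution: you expect Theorem~\ref{thm:nonpos} to ``glue the subcases together,'' but the paper's proof of Lemma~\ref{lem:I} does not invoke Theorem~\ref{thm:nonpos} at all. That theorem is used to relate the four popularity/defendability notions (Corollary~\ref{cor:implication}), i.e., to convert the conclusion of Lemma~\ref{lem:I} about super popularity into maximality among weakly defendable sets; it plays no role in resolving the copy-selection case analysis, which is instead absorbed by the alternating-component argument described above.
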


Before providing the proof of this lemma, we show that it easily implies the following theorems, which are our main results.

\begin{theorem}
The output $I$ of the algorithm is super popular and is largest among all weakly defendable common independent sets.
\end{theorem}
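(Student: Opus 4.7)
The plan is that both conclusions fall out of Lemma~\ref{lem:I} essentially by unpacking definitions; the heavy lifting has already been done in that lemma, so the only real work is to keep the asymmetric roles of $I$ and $J$ straight. Both halves will reduce to comparing $\weakvote_1(I,J)+\weakvote_2(I,J)$ against zero for an arbitrary $J\in\cI_1\cap\cI_2$.

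For super popularity, I would fix an arbitrary $J\in\cI_1\cap\cI_2$ and write $\weakvote_i(I,J)=\min_{N_i}\vote_i(I,J,N_i)$, where the minimum is over all weakly feasible pairings $N_i$ for $(I,J)$ with respect to $M_i$. Since $N_1$ and $N_2$ range over disjoint sets of pairings and appear additively, the sum of the two minima equals the minimum of the sum:
\[ \weakvote_1(I,J)+\weakvote_2(I,J) = \min_{N_1,N_2}\bigl[\vote_1(I,J,N_1)+\vote_2(I,J,N_2)\bigr]. \]
Lemma~\ref{lem:I} asserts this quantity is nonnegative for every $J$, which is exactly super popularity of $I$.

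For the maximality claim, I would argue by contradiction: suppose some weakly defendable $J\in\cI_1\cap\cI_2$ has $|J|>|I|$. Unfolding weak defendability of $J$ at the choice $K=I$ gives $\weakvote_1(I,J)+\weakvote_2(I,J)\leq 0$. On the other hand, the second clause of Lemma~\ref{lem:I} says that once $|J|>|I|$, every pair of weakly feasible pairings satisfies $\vote_1(I,J,N_1)+\vote_2(I,J,N_2)>0$, so the minimum over such pairs, which by the identity above equals $\weakvote_1(I,J)+\weakvote_2(I,J)$, is strictly positive. This contradicts the previous inequality, establishing the size bound. The only obstacle worth highlighting is Lemma~\ref{lem:I} itself; the subtlety in the deduction is that super popularity places $I$ in the first argument of $\weakvote$ (a stronger condition than popularity) while weak defendability of $J$ places $I$ in the first argument against $J$ in the second (a weaker condition than defendability), and it is precisely this sign/role asymmetry that lets a single lemma simultaneously discharge both halves of the theorem.
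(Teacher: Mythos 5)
Your argument is correct and follows essentially the same route as the paper: both halves are read off from Lemma~\ref{lem:I}, the first clause giving $\weakvote_1(I,J)+\weakvote_2(I,J)\geq 0$ (super popularity), the second clause showing any $J$ with $|J|>|I|$ has $\weakvote_1(I,J)+\weakvote_2(I,J)>0$ and hence cannot be weakly defendable; your interchange of sum and minimum over the independently chosen pairings $N_1,N_2$ is valid. The one small omission is that ``largest \emph{among} all weakly defendable common independent sets'' also requires $I$ itself to be weakly defendable; this follows immediately from the super popularity you already established together with the implication chain of Corollary~\ref{cor:implication}, and the paper notes this explicitly.
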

\begin{proof}
The first claim of Lemma~\ref{lem:I} implies $\weakvote_1(I,J)+\weakvote_2(I,J)\geq 0$ for any $J\in \cI_1\cap \cI_2$, and hence $I$ is super popular. By Corollary~\ref{cor:implication}, then $I$ is weakly defendable.
By the second claim of Lemma~\ref{lem:I}, any common independent set $J\in \cI_1\cap \cI_2$ larger than $I$ satisfies $\weakvote_1(I,J)+\weakvote_2(I,J) > 0$, and hence $J$ is not weakly defendable. Thus, $I$ is a largest weakly defendable common independent set.
\end{proof}

Since we have Corollary~\ref{cor:implication} and the algorithm runs in polynomial time, the following theorem holds.

\begin{theorem}
Given two ordered matroids $M_1=(S, \cI_1, \succ_1)$ and $M_2=(S,\cI_2, \succ_2)$, one can find a maximum popular common independent set in polynomial time.
\end{theorem}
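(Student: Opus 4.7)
The plan is to combine the previous theorem about the algorithm's output with Corollary~\ref{cor:implication} to deduce both optimality and tractability. Concretely, the algorithm of Kamiyama described in Section~\ref{sec:algorithm} outputs a common independent set $I$, and the previous theorem establishes that this $I$ is super popular and is largest among all weakly defendable common independent sets. So the remaining work is only to (i) read off the maximality of $I$ among popular common independent sets and (ii) verify the polynomial running time.

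For (i), I would use Corollary~\ref{cor:implication}, which gives the chain
\[ \text{super popular} \Rightarrow \text{popular} \Rightarrow \text{defendable} \Rightarrow \text{weakly defendable}. \]
From this chain, the output $I$ is in particular popular (since it is super popular). Moreover, every popular common independent set $J$ is weakly defendable by the same chain, hence satisfies $|J|\le |I|$ because $I$ is largest among weakly defendable sets. Combining these two observations, $I$ is a maximum-cardinality popular common independent set.

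For (ii), I would appeal to the description of the algorithm in Section~\ref{sec:algorithm}: it constructs the extended matroids $M^*_1,M^*_2$ on the doubled ground set $S^*$ with $|S^*|=2|S|$, then computes an $(M^*_1,M^*_2)$-kernel using Fleiner's algorithm \cite{fleiner2001matroid,fleiner2003fixed}, which runs in $\mathcal{O}(|S^*|^2)=\mathcal{O}(|S|^2)$ time (assuming standard oracle access to the matroids). The final step $I\coloneqq \pi(I^*)$ is a single linear-time projection. Thus the whole procedure runs in polynomial time, completing the proof.

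There is no real obstacle here: the content is entirely packaged in the previous theorem and in Corollary~\ref{cor:implication}, and the theorem functions as a concluding statement of the paper's main algorithmic result. The only thing to be careful about is not to over-claim: the theorem asks for a maximum \emph{popular} common independent set, so I must explicitly invoke the implication super popular $\Rightarrow$ popular to show $I$ is popular, and popular $\Rightarrow$ weakly defendable to show that no popular set exceeds $|I|$.
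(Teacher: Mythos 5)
Your proposal is correct and matches the paper's argument exactly: the paper also derives this theorem directly from the preceding theorem (the output is super popular and largest among weakly defendable sets) together with Corollary~\ref{cor:implication} and the polynomial running time of Fleiner's kernel algorithm. You have simply spelled out the implication chain more explicitly than the paper's one-line justification.
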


We now provide the proof of Lemma~\ref{lem:I}. It uses arguments similar to those used in Kavitha~\cite{kavitha2014size} and Kamiyama \cite{kamiyama2020popular}.
\begin{proof}[\bfseries{Proof of Lemma~\ref{lem:I}}]
Since each $N_i$ is a weakly feasible pairing, $I-u+v\in \cI_i$ for any $uv\in N_i$ and $I+v\in \cI_i$ for any $v\in J\setminus I$ not covered by $N_i$. 
By the stability of $I^*$, any element in $J\setminus I$ is covered by $N_1$ or $N_2$.
Consider the bipartite graph $G=(I\setminus J, J\setminus I; N_1\cup N_2)$, which is decomposed into alternating paths, cycles, and isolated vertices in $I\setminus J$.
For each path/cycle $P$, define its score as
\begin{align*}
\score(P)=&+|\{\,uv\in P :\,  uv\in N_i,\, u\succ_i v \text{ for some } i\in \{1,2\}\,\}| \\
&- |\{\,uv\in P :\,  uv\in N_i,\, u\prec_i v \text{ for some } i\in \{1,2\}\,\}|\\
&+2(|P\cap (I\setminus J)|-|P\cap (J\setminus I)|),
\end{align*}
where we assume $u\in I\setminus J$ and $v\in I\setminus J$ and identify $P$ with its edge set (resp., its vertex set) in the first and second terms (resp., in the third term).
Note that $\vote_1(I, J, N_1)+\vote_2(I,J,N_2)$ equals the sum of the scores of all cycles/paths in $G$ plus $2\cdot\#\text{\{isolated vertices of $I\setminus J$ in $G$\}}$.
Therefore, showing $\score(P)\geq 0$ for any path/cycle $P$ completes the proof of the first claim of Lemma~\ref{lem:I}.

Let $u_0 v_1 u_1 v_2 u_2 \dots v_k u_k$ be the elements on $P$ appearing in this order where $u_\ell\in I\setminus J$ and $v_\ell\in J\setminus I$ for each $\ell$, 
and we set $u_0=\emptyset$ if $P$ starts at $J\setminus I$, we set $u_k=\emptyset$ if $P$ ends at $J\setminus I$, and let $u_0=u_k$ if $P$ is a cycle.
Without loss of generality, we assume $u_{\ell-1} v_\ell\in N_1$ and $u_{\ell} v_{\ell}\in N_2$ for each $\ell$.

Consider the triple $u_{\ell-1} v_{\ell} u_{\ell}$ for $\ell=1,2,\dots,k$.
Since $I^*$ is stable, each of $x(v_\ell)$ and $y(v_\ell)$ should be dominated by $I^*$ in $M^*_1$ or $M^*_2$.
Note that any $x$-copy (resp., $y$-copy) is preferred to any $y$-copy (resp., $x$-copy) in $\succ^*_1$ (resp., $\succ^*_2$) and that we have $u_{\ell-1} v_\ell\in N_1$ and $u_{\ell} v_{\ell}\in N_2$. Note also that $u_{\ell-1}=\emptyset$ (resp., $u_\ell=\emptyset$) implies that $v_\ell$ is uncovered in $N_1$ (resp., in $N_2$), and hence $I^*+y(v_\ell)\in \cI_1$ (resp., $I^*+x(v_\ell)\in \cI_2$). From these, we obtain the following conditions. Here, an element $u\in I\setminus J$ is called {\em $x$-type} (resp., {\em $y$-type}) if $I^*\cap \{x(u), y(u)\}=x(u)$ (resp., $y(u)$).
\begin{enumerate}
\setlength{\itemindent}{4mm}
\item[(a)] If $u_{\ell-1}$ and $u_{\ell}$ are both $x$-type, then $u_{\ell-1} \succ_1 v_{\ell}$ or $u_{\ell} \succ_2 v_{\ell}$.
\item[(b)] If $u_{\ell-1}$ and $u_{\ell}$ are both $y$-type, then $u_{\ell-1} \succ_1 v_{\ell}$ or $u_{\ell} \succ_2 v_{\ell}$.
\item[(c)] If $u_{\ell-1}$ and $u_{\ell}$ are $y$-type and $x$-type, respectively, then $u_{\ell-1} \succ_1 v_{\ell}$ and $u_{\ell} \succ_2 v_{\ell}$.
\item[(d)] If $u_{\ell-1}=\emptyset$, then $u_{\ell} \succ_2 v_{\ell}$ and $u_{\ell}$ is $y$-type.
\item[(e)] If $u_{\ell}=\emptyset$, then $u_{\ell-1} \succ_1 v_{\ell}$ and $u_{\ell-1}$ is $x$-type.
\end{enumerate}
The amount of votes obtained by the comparisons on $u_{\ell-1} v_\ell\in N_1$ and $u_{\ell} v_{\ell}\in N_2$ is nonnegative in all of the above cases, and in particular, it is $2$ in case (c).
This amount can be $-2$ only in the unlisted case, i.e., when $u_{\ell-1}$ and $u_{\ell}$ are $x$-type and $y$-type, respectively. 
Consider calculating the sum of the first two terms of $\score(P)$
by counting votes along $P$ from $u_0$ to $u_k$. 
The value increases by $2$ when $u_\ell$ turns from $y$-type to $x$-type, does not decrease when its type does not change, and decreases at most by $2$ when $u_\ell$ turns from $x$-type to $y$-type.
If $P$ is a cycle, we can immediately obtain $\score(P)\geq 0$.

We then assume that $P$ is a path. 
By the above arguments, the sum of the first two terms of $\score(P)$ is at least  $2\cdot(\#\{u_\ell \text{ turns from } y\text{-type to } x\text{-type}\}-\#\{u_\ell \text{ turns from } x\text{-type to } y\text{-type}\})$.
The third term of $\score(P)$, i.e., $2(|P\cap (I\setminus J)|-|P\cap (J\setminus I)|)$, is $-2$/$0$/$2$ if both/either/none of $u_0$ and $u_k$ is $\emptyset$. 
With the conditions (d) and (e), these imply $\score(P)\geq 0$.

\medskip
Finally, we prove the second claim of the lemma.
Suppose $|J|>|I|$. As we observed before, all elements in $J\setminus I$ are covered by $N_1\cup N_2$.
Since $|I\setminus J|<|J\setminus I|$, there exists a path $P=u_0 v_1 u_1 v_2 u_2 \dots v_k u_k$ in $G$ that starts and ends at $J\setminus I$, i.e., $u_0=u_k=\emptyset$. Then, the third term of $\score(P)$ is $-2$. 
By (d) and (e), we have $u_1\succ_2 v_1$ and $u_{k-1}\succ_1 v_k$, from which we obtain $2$ votes.
From (d) and (e),  we also obtain that $u_1$ is $y$-type while $u_{k-1}$ is $x$-type, and hence $\#\{u_\ell \text{ turns from } y\text{-type to } x\text{-type}\}$ is strictly larger than $\#\{u_\ell \text{ turns from } x\text{-type to } y\text{-type}\}$.
These imply that the sum of the first two terms of $\score(P)$ is at least $4$. Thus, $\score(P)\geq 2>0$, and hence $\vote_1(I, J, N_1)+\vote_2(I,J,N_2)>0$.
\end{proof}

\section{Lexicographic Preferences}\label{sec:lex}

In the previous sections, we showed that finding a maximum popular matching in two-sided markets can be done in polynomial time, even if the two sides have arbitrary matroid constraints. 
However, our definition of popularity is not the only possible definition, and we may conceive other natural definitions of popularity for many-to-many settings.
In this section we take a different approach and define popularity with respect to a much simpler voting rule, where the agents compare the two matchings/independent sets lexicographically. This means that they care mostly about their best element being as good as possible and with regard to that, their second best element being as good as possible, etc.
This also implies that each agent has only one vote in the sense that they must choose a vote from the set $\{ -1,0,+1\}$ depending on which independent set they like better, similar to the one-to-one matching case. Note that in this model a smaller indepent set can be better for an agent than a much larger one, if the best element the agent obtains in the smaller one is better. 

To make our proofs easier to follow, in this section we consider only partition matroids (with arbitrary upper bounds on the partition classes). As all our results are hardness results, they naturally extend to arbitrary matroids. In the case of partition matroids, where the voters correspond to the partition classes of the two matroids, we can model the instance with a bipartite graph $G=(U,W;E)$, where the vertices of $U\cup W$ correspond to the agents who vote and the edge set corresponds to the common ground set of the two matroids. We assume that each agent $v\in U\cup W$ has a capacity $b(v)$ and a strict order $\succ^v$ over their adjacent edges, which we denote by $E(v)$. 

Next, we define lexicographic popularity formally, for the case of the above $b$-matching problem.
We call an edge set $\mu \subseteq E$ a \textit{$b$-matching} if $|\mu \cap E(v)|\le b(v)$ for all $v\in U\cup W$.
Given two $b$-matchings $\mu $ and $\mu '$ in $G$ and a vertex $v\in U\cup W$, we say that $\mu $ is \textit{lexicographically better than $\mu '$ for $v$}, denoted by $\mu \succ^v_{\rm lex} \mu '$, if $(E(v)\cap (\mu \cup \mu '))\setminus (\mu \cap \mu ')$ is non-empty and the best element of this set according to the order $\succ^v$ is in $\mu$. (In the context of arbitrary matroids, this would be generalized by saying that common independent set $I$ is lexicographically better than $I'$ for the agent corresponding to $S_j^i$, $i\in \{ 1,2\}$, if the best element of $ (S_j^i \cap (I\cup I')) \setminus (I\cap I')$ according to the order $\succ_i$ is in $I$).
Let $\vote_{\rm lex}^v(\mu ,\mu ') \in \{ -1,0,+1\}$ denote the vote of agent $v$ when comparing $\mu '$ to $\mu $, that is, it is $+1$ if $v$ lexicographically prefers $\mu $ to $ \mu '$, $-1$ if $v$ lexicographically prefers $\mu '$ to $\mu $, and 0 otherwise. (Note that a vote can only be zero if $E(v)\cap \mu =E(v)\cap \mu '$, i.e. $v$ obtains the same set of edges in both $b$-matchings). Let $\vote_{\rm lex}(\mu ,\mu ')=\sum_{v\in U\cup W}\vote_{\rm lex}^v(\mu ,\mu ')$ denote the sum of votes of all agents. We say that a $b$-matching $\mu $ is \textit{lexicographically popular}, if for any $b$-matching $\mu '\subseteq E$ it holds that $\vote_{\rm lex}(\mu ,\mu ')\ge 0$. Otherwise, if $\vote_{\rm lex}(\mu ,\mu ')<0$, we say that $\mu '$ \textit{dominates} $\mu$. Clearly, it holds that $\vote_{\rm lex}(\mu ,\mu ')=-\vote_{\rm lex}(\mu ',\mu )$ as exchangeability is symmetric for partition matroids.

In the standard one-to-one popular matching problem, lexicographic popularity and popularity coincide. However, when capacities can be larger, lexicographic popularity and popularity differ. 
First, we observe that in contrast to popular matchings, a lexicographically popular matching does not always exist.

\begin{figure}
    \centering
    \includegraphics{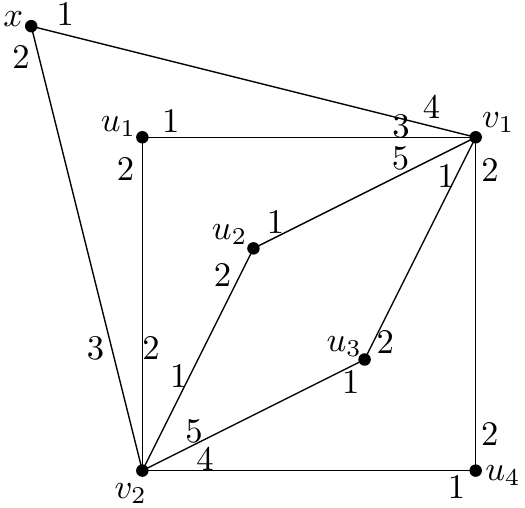}
    \caption{The instance given in Example \ref{ex:no-instance}. The numbers on the edges correspond to the preferences of the agents (smaller numbers are better).}
    \label{fig:no-instance}
\end{figure}

\begin{example}
\label{ex:no-instance}
We give an example with no popular matching with respect to lexicographic preferences. We will also use this example as a gadget later and exploit its properties. 

We have 7 agents, $x,u_1,\dots ,u_4$ on one side of the graph and $v_1,v_2$ on the other. The capacities of agents $u_1$ and $u_4$ are 1, the capacities of $u_2,u_3,v_1,v_2$ are 2 and the capacity of $x$ is some number $q\ge 1$. (We do not fix $q$, because later in the hardness proofs we will use that no popular matching exists for any $q\ge 1$ in this example). Next, we describe the preferences.
\begin{center}
\begin{tabular}{ll|ll}
$x:$ & $v_1\succ v_2$    &  $v_1:$ & $u_3\succ u_4\succ u_1\succ x\succ u_2$ \\
$u_1:$ & $v_1\succ v_2$     &  $v_2:$ & $u_2\succ u_1\succ x\succ u_4\succ u_3$ \\
$u_2:$ & $v_1\succ v_2$ & & \\
$u_3:$ & $v_2\succ v_1$ & &\\
$u_4: $ & $v_2\succ v_1 $ & & \\
\end{tabular}
\end{center}
The instance is illustrated in Figure \ref{fig:no-instance}.

Suppose that there is a lexicographically popular matching $\mu$. Then, the edges $(u_2,v_2)$ and $(u_3,v_1)$ must be in $\mu$. Indeed, if one of them, say $(u_2,v_2)$ is not in $\mu$, then $u_2$ is not saturated, so both $v_2$ and $u_2$ can improve by adding $(u_2,v_2)$ to $\mu$, and at most one agent gets worse (if $v_2$ is saturated and has to drop an edge). 

Agent $u_4$ must be matched in $\mu$, because otherwise $v_1$ is either unsaturated or has a worse partner. So, by adding $(u_4,v_1)$ to $\mu$ and deleting the worse edge of $v_1$ if $v_1$ was saturated, we obtain a matching where $v_1,u_4$ both improve, and at most one agent gets worse. If $u_4$ is matched to $v_2$, then one of $\{ x,u_1\}$ has to be totally unmatched in $\mu$. So, if agent $v_2$ drops $u_4$ and takes the free one of $\{ x,u_1\}$, then only $u_4$ gets worse and two agents improve, contradicting the lexicographic popularity of $\mu$. So we have that $(u_4,v_1)\in \mu$.

We can see that $u_1$ must be matched by replacing $u_4$ by $u_1$ and $v_1$ by $v_2$ in the above argument. As we have already seen that $v_1$ is saturated by $u_3$ and $u_4$, $(u_1,v_2)$ must be in $\mu$. We obtained that $\mu =\{ (u_1,v_2),(u_2,v_2),(u_3,v_1),(u_4,v_1)\}$. However, $\mu$ is dominated by the matching $\mu '=\{ (u_1,v_1),(u_2,v_1),(u_3,v_2),(u_4,v_2)\}$, because $u_1,u_2,u_3,u_4$ all improve, and only $v_1$ and $v_2$ get worse.

Hence, we have shown that no lexicographically popular matching exists in this instance if $x$ has capacity at least $1$. However, if we add $q$ dummy agents $d_1,d_2,\dots ,d_q$ who are only adjacent to $x$ and $x$ considers them the best, then there will be a unique lexicographically popular matching, namely $\mu =\{ (u_1,v_1),(u_2,v_2),(u_3,v_1),(u_4,v_2)\} \cup \{ (x,d_i)\mid i\in [q] \}$.

First we show that there can be no other lexicographically popular matching. By the same reasoning as before, $(u_2,v_2),(u_3,v_1)$ must be in $\mu $. Also, both $u_1$ and $u_4$ has to be matched. Therefore, the only other possibility for a lexicographically popular matching is the matching $\{ (u_1,v_2),(u_2,v_2),(u_3,v_1),(u_4,v_1)\}\cup \{ (x,d_i)\mid i\in [q] \}$, but it is dominated by $\{ (u_1,v_1),(u_2,v_1),(u_3,v_2),(u_4,v_2)\}\cup \{ (x,d_i)\mid i\in [q] \}$. 

Next we show that $\mu$ is lexicographically popular. All agents other than $u_2,u_3,v_1,v_2$ are saturated and matched to their best partners, hence only these four can improve. As $v$ is maximal, for any matching $\mu '$ to dominate $\mu$ there has to be an agent not from $\{u_2,u_3,v_1,v_2\}$ who gets worse, so the difference between the number of improving agents and the number of agents getting worse among $u_2,u_3,v_1,v_2$ must be at least 2. Let $\mu '$ be a matching that dominates $\mu$. Agents $u_2$ or $u_3$ could only improve if $v_1$ or $v_2$ gets worse respectively. So, $v_1$ and $v_2$ must both improve, while $u_2$ and $u_3$ must not be worse off. This is only possible if $v_1$ gets $u_4$ and $v_2$ gets $u_1$. But then $u_1$ and $u_4$ both get worse, so $\vote_{\rm lex} (\mu ,\mu ')\ge 0$, a contradiction.

\end{example}

With a counterexample in hand, we can show that deciding whether a lexicographically popular $b$-matching exists and verifying whether a $b$-matching is lexicographically popular are both hard. 
The proof uses a reduction from 
the NP-complete Exact 3-Cover problem (\textsc{x3c}). Given an instance $I$ of \textsc{x3c}, we can construct an instance $I'$ of the $b$-matching problem such that $I'$ has a unique candidate for lexicographically popular $b$-matching, and this candidate is lexicographically popular if and only if instance $I$ does not have an exact 3-cover.

\begin{figure}
    \centering
    \includegraphics[height=0.35\textheight]{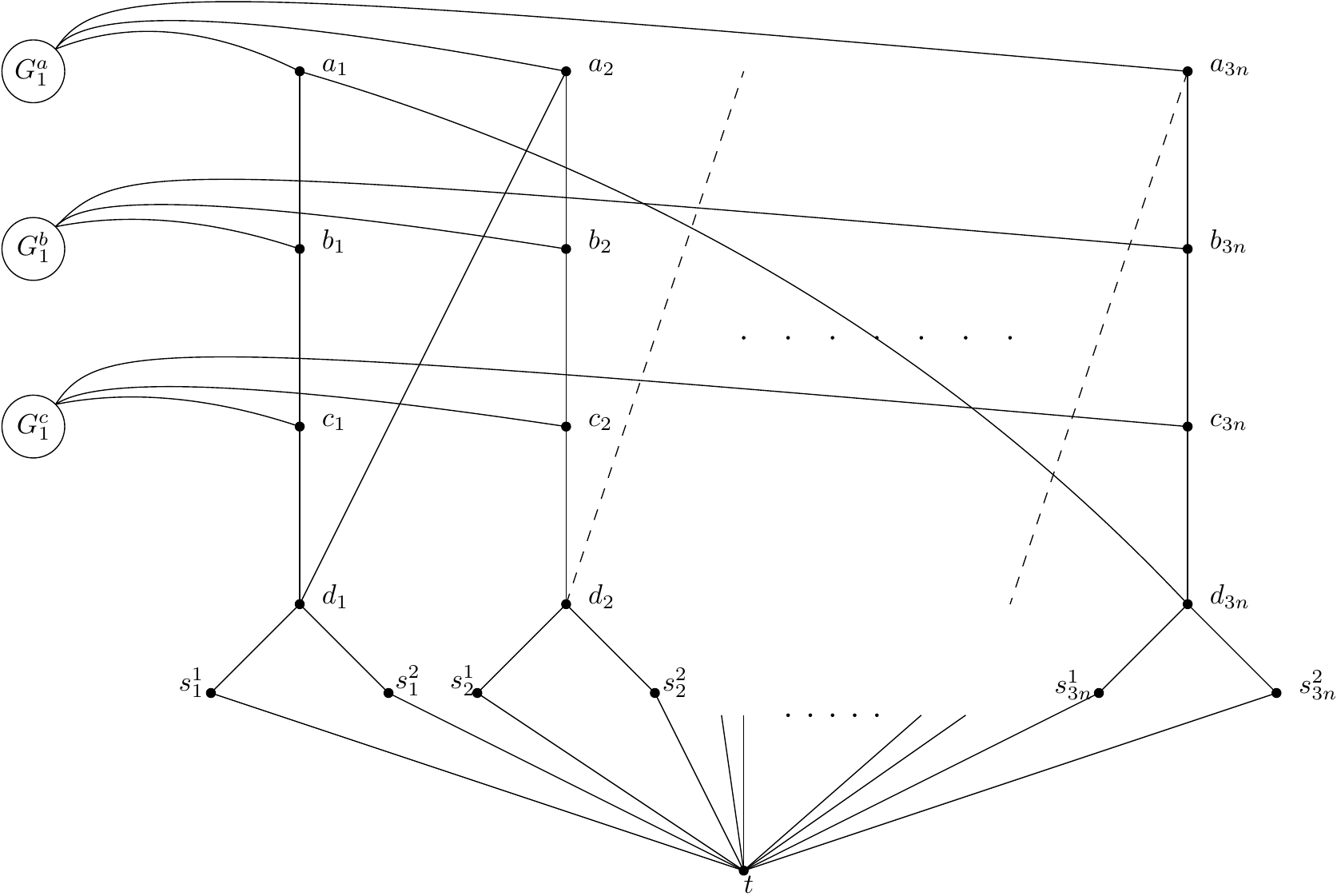}
    \caption{An illustration of the base of the construction of Theorem \ref{thm:lex_hardness}, with the 3 corresponding gadgets of a set $S_1=\{ 1,2,3n\}$. }
    \label{fig:lex-pop-constr}
\end{figure}
\begin{theorem} \label{thm:lex_hardness}
It is coNP-hard to decide if a given instance $(G;\succ ;b)$ admits a lexicographically popular $b$-matching. It is also coNP-complete to verify whether a given $b$-matching $\mu$ is lexicographically popular. These hold even if each agent has capacity at most 3.
\end{theorem}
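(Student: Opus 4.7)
The plan is to reduce from the NP-complete Exact 3-Cover (\textsc{x3c}). Given an \textsc{x3c} instance with universe $X=\{x_1,\dots,x_{3n}\}$ and family $\mathcal{S}=\{S_1,\dots,S_m\}$ of 3-element subsets, I would construct a $b$-matching instance $I'$ (with every capacity at most~$3$) together with a distinguished $b$-matching $\mu$ such that (a) every lex-popular $b$-matching of $I'$ must coincide with $\mu$, and (b) $\mu$ is lex-popular if and only if the \textsc{x3c} instance has \emph{no} exact 3-cover. From (a) and (b), the existence problem is equivalent to the non-existence of an exact 3-cover and hence coNP-hard; the verification problem for $\mu$ is coNP-hard by (b), and is trivially in coNP since a dominating $\mu'$ is a polynomial-size certificate of non-popularity. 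The single construction therefore yields both statements of the theorem.

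The building block is the gadget of Example~\ref{ex:no-instance}. I would install a private copy $G_i$ per element $x_i\in X$, with vertices $u^i_1,\dots,u^i_4,v^i_1,v^i_2,x^i$, preserving the local capacities and preferences (so in particular $x^i$ has capacity~$3$). For each set $S_j$, I would introduce a set-agent $s_j$ of capacity~$3$ adjacent to the three $x^i$ with $x_i\in S_j$, together with a small number of private ``back-up'' neighbors attached to both $s_j$ and each $x^i$. The preferences are calibrated so that, locally inside each gadget, the $x^i$-back-ups put $G_i$ into the ``with-dummies'' lex-popular state of Example~\ref{ex:no-instance}; globally, in $x^i$'s list the back-ups come first, then the adjacent $s_j$'s, then $v^i_1,v^i_2$, and in $s_j$'s list the back-ups come first, then the three $x^i$'s. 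The candidate $\mu$ places every $G_i$ in the ``good'' configuration $\mu_A^i=\{(u^i_1,v^i_1),(u^i_2,v^i_2),(u^i_3,v^i_1),(u^i_4,v^i_2)\}$, saturates each $x^i$ and each $s_j$ by their back-ups, and contains no $s_j$-$x^i$ edge. Part~(a) follows by running the local analysis of Example~\ref{ex:no-instance} inside each $G_i$ (which forces $\mu_A^i$) and then pinning each $x^i$ and each $s_j$ to its back-ups via straightforward pairwise swap arguments.

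For the equivalence~(b), the forward direction is constructive: given an exact cover $\mathcal{S}'\subseteq\mathcal{S}$, form $\mu'$ from $\mu$ by, for each $S_j\in\mathcal{S}'$, replacing $s_j$'s back-up edges with the three $(s_j,x^i)$ edges for $x_i\in S_j$, and correspondingly swapping one back-up edge at each such $x^i$ for the new $(s_j,x^i)$ edge; a direct tally, following the per-pair bookkeeping of Example~\ref{ex:no-instance}, yields $\vote_{\rm lex}(\mu,\mu')<0$. For the converse, any dominator $\mu'$ of $\mu$ must ``activate'' some subfamily $\mathcal{S}'\subseteq\mathcal{S}$ of set-agents (i.e., replace their back-up edges by $x^i$-edges); I would then show case-by-case that a partial activation (some $s_j$ matched to fewer than three of its $x^i$'s) leaves behind back-ups whose positive vote for $\mu$ exceeds any gain, while a double-coverage (some $x^i$ covered by more than one activated $s_j$) exhausts the back-up budget on the $x^i$-side, so the only surviving profitable configurations correspond exactly to exact 3-covers. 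The main obstacle is calibrating the number and preference-rank of the back-ups so that the vote arithmetic is tight enough to rule out all mixed dominators, including those combining partial activations with internal gadget flips to $\mu_C^i$ or $\mu_D^i$; the capacity bound of $3$ is exactly what makes the ``three back-ups each'' choice both natural and necessary. Once this accounting is verified, the reduction is polynomial time and both coNP-hardness statements of the theorem follow.
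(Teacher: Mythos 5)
There is a genuine gap, and it is in the heart of the reduction: the vote arithmetic of your construction cannot produce a dominating matching, regardless of whether an exact 3-cover exists. In your candidate $\mu$, every $x^i$, every $s_j$, and every back-up agent is saturated with its lexicographically best possible partners (you place the back-ups \emph{first} in both $x^i$'s and $s_j$'s lists), and the gadget-internal agents are in the configuration that Example~\ref{ex:no-instance} shows to be locally popular once $x$ is pinned to preferred dummies. So the only agents who could possibly cast a $+1$ vote against $\mu$ are the internal agents $u^i_2,u^i_3,v^i_1,v^i_2$, and the example's analysis already shows they cannot outvote the agents they hurt. Concretely, in your forward direction, ``activating'' a set $S_j$ makes $s_j$ worse (it trades back-ups, its favorites, for $x^i$'s), makes each of the three $x^i$'s worse (same reason), and makes all six dropped back-ups worse; nobody improves, so $\vote_{\rm lex}(\mu,\mu')>0$ rather than $<0$. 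Your instance therefore always admits a lex-popular matching, and the equivalence with \textsc{x3c} fails.

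What is missing is a \emph{demand} structure: a large set of agents who all strictly want to deviate from $\mu$ and whose simultaneous improvement is feasible exactly when an exact cover exists. The paper builds this by making the gadgets per-set (three copies $G_j^a,G_j^b,G_j^c$ per set $S_j$) and putting the element agents $a_i,b_i,c_i,d_i$ on a long cycle $a_i\to b_i\to c_i\to d_i\to a_{i+1}$, where each element agent \emph{prefers} the next cycle vertex to its three gadget neighbours $x^l_{j}$, while each $x^l_j$ ranks its element agents \emph{above} everything in its gadget. Then in $\mu$ the element agents are saturated by gadget edges they do not love, they all want to move onto the cycle, and doing so forces them to drop two of their three $x$-edges; the exact cover is precisely what lets every element agent keep one $x$-edge so that only $2n$ out of $3n$ gadget $x$-agents per letter are harmed, tipping the total vote to $+1$ via the auxiliary agents $s_i^1,s_i^2,t$. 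You would need to re-orient your preferences (gadget $x$-agents must rank their external neighbours highest, and the external agents must prefer something they do not get in $\mu$) and add such a counting scaffold before the ``tight vote arithmetic'' you defer to could even in principle come out right. Your observations that a dominating matching certifies non-popularity (coNP membership of verification) and that a unique candidate reduces existence to verification are fine, but they rest on a construction that does not work as specified.
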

\begin{proof}
We reduce from the NP-complete Exact 3-Cover problem (\textsc{x3c}). An instance $I$ of \textsc{x3c} consists of a set of elements $\mathcal{X}=\{ 1,2,\dots, 3n\}$ and a family of 3-element subsets of $\mathcal{X}$, $\mathcal{S}=\{ S_1,\dots ,S_{3n}\}$. The question is whether there exists a subset $S\subset \mathcal{S}$ such that each element is contained in exactly one member of $S$. We use the fact that \textsc{x3c} remains NP-hard even if each element appears in exactly three sets, hence the number of sets is also $3n$.

We create an instance $I'$ of our problem as follows.
\begin{itemize}
    \item [--] For each element $i\in [3n]$, we create 6 agents $a_i,b_i,c_i,d_i,s_i^1,s_i^2$ with capacities $b(a_i)=b(b_i)=b(c_i)=3$, $b(d_i)=2$, $b(s_i^1)=b(s_i^2)=1$.
    \item [--] For each set $S_j\in \mathcal{S}$ we create three gadgets $G_j^a$, $G_j^b$, $G_j^c$. Each is just a copy of Example~\ref{ex:no-instance}. For each gadget $G_j^l$ ($l\in \{ a,b,c\}$), we assign the corresponding $x$ agent, called $x_j^l$, capacity~3. $x_j^a$ is connected to the three $a_i$ agents corresponding to the elements in $S_j$, $x_j^b$ is connected to the 3 corresponding $b_i$, and $x_j^c$ is connected to the 3 corresponding $c_i$ agents.
    \item [--] We add a special agent $t$ with $b(t)=3$. 
    
\end{itemize}

We describe the preferences (which define the edges of the graph too).

\begin{center}
\begin{tabular}{ll|ll}
$a_i:$ & $b_i \succ x^a_{i_1} \succ x^a_{i_2} \succ x^a_{i_3} \succ d_{i-1}$ & $b_i:$ &  $c_i \succ x^b_{i_1} \succ x^b_{i_2} \succ x^b_{i_3} \succ a_i$\\
$c_i:$  &  $d_i \succ x^c_{i_1} \succ x^c_{i_2} \succ x^c_{i_3} \succ b_i$ & $d_i:$ & $a_{i+1} \succ s_i^1 \succ s_i^2 \succ c_i$\\
$s_i^{\ell}:$ & $d_i \succ t$ & $t:$ & $s_1^1 \succ s_1^2 \succ s_2^1 \succ s_2^2 \succ \dots \succ s_{3n}^1 \succ s_{3n}^2$,\\
\end{tabular}
\end{center}
where $i\in [3n]$, $\ell \in [2]$ and for an element $i\in \mathcal{X}$, $i_1\le i_2 \le i_3$ are the indices of the 3 sets containing $i$.
The preferences of the agents in the $G_j^l$ gadgets are inherited from Example \ref{ex:no-instance}, with the addition that for $l\in \{ a,b,c\}$, $j\in [3n]$ we add the three $l_i$ agents corresponding to the elements in $S_j$ to the beginning of $x_j^l$'s preference list (so they are considered best). The construction is illustrated in Figure \ref{fig:lex-pop-constr}.

First we show that there is only one possible candidate to be a popular $b$-matching in $I'$. By the observations in Example \ref{ex:no-instance}, we get that all $a_i,b_i,c_i$ agents must be saturated by being matched to all three of their $x_j^l$ neighbours, as otherwise there would be a matching $\mu '$ that only differs from $\mu$ on a gadget $G_j^l$ (where $x_j^l$ is not matched to all three of their best partners) that dominates $\mu .$ So, the agents $d_i$, $i\in [3n]$ can only be matched to their $s_i^1,s_i^2$ partners. Next, observe that $t$ cannot get any of the $s_i^{\ell}$ agents, as otherwise the corresponding $d_i$ would be unsaturated in $\mu$ and $s_i^{\ell}$ could switch to $d_i$, improving both of them and only worsening $t$. As $\mu$ must also be maximal, all $(d_i,s_i^1),(d_i,s_i^2)$ edges must be in $\mu$. Finally, by the observation in Example \ref{ex:no-instance}, the restriction of $\mu$ to a gadget $G_j^l$ must be the unique popular matching inside.

Therefore, we have shown that $I'$ admits a popular matching if and only if the above described $b$-matching $\mu$
 is popular. Hence, by showing that there is a matching $\mu '$ that dominates $\mu$ if and only if there is an exact 3-cover, we prove the hardness of both the existence and verification problems simultaneously.

\begin{claim}
If there is an exact 3-cover, then $\mu$ is not popular.
\end{claim}
\begin{proof}
Let the 3-cover be $S_{j_1},\dots,S_{j_n}$. Let $\mu '$ be the following $b$-matching: $\mu '$ is the same as $\mu$ on the graphs induced by the gadgets $G_j^l$. We add to $\mu '$ all edges of the form $(a_i,b_i)$, $(b_i,c_i)$, $(c_i,d_i)$, $(d_i,a_{i+1})$, $i+1$ taken modulo $3n$. Then, we add the edge $(t,s^1_1)$ to $\mu '$. Finally, for each of $a_i/b_i/c_i$ agent, we add to $\mu '$ the edge going to $x_j^a/x_j^b/x_j^c$ agent which corresponds to the index of the set that covers $i$ in the exact 3-cover. 

So, all $a_i,b_i,c_i,d_i$ agents and $t$ improve and all $s_i^1,s_i^2$ agents get worse. Finally, only $2n+2n+2n$ $x_j^l$ agents get worse. Hence, $\vote_{\rm lex}(\mu ',\mu )=3n+3n+3n+3n+1-6n-2n-2n-2n=1$, so $\mu '$ dominates $\mu$.
\end{proof}

\begin{claim}
\label{claim:lex-pop1}
For any matching $\mu '$ and any gadget $G_j^l$, it holds that if $\mu '$ does not contain all 3 of $x_j^l$'s best edges, then $\sum_{v\in G_j^l}\vote_{\rm lex}^v(\mu ',\mu )\le -1$.
\end{claim} 
\begin{proof}
As $x$ lost one of the best partners, they will vote with $-1$ anyway.
Since $\mu$ restricted to $G_j^l$ is popular, if there is a matching $\mu '$ with $\sum_{v\in G_j^l}\vote_{\rm lex}^v(\mu ',\mu )\ge 0$, then someone must improve by getting $x$. But only one agent, $v_2$ can improve their position by getting $x$, and can only do this by dropping only $u_4$ and keeping $u_2$, so $u_4$ must be worse off then, as $v_2$ was their best option. There is only one agent, $v_1$, who could improve now that $u_4$ is free, but only by keeping $u_3$   and leaving $u_1$, who must be worse off then. Now that $u_1$ is free, $v_2$ could improve even more, but it does not add another $+1$ to the vote. Hence, the number of worsening agents always remains more than the number of improving ones, so $\sum_{v\in G_j^l}\vote_{\rm lex}^v(\mu ',\mu )$ must be at most $-1$.
\end{proof}

\begin{claim}
If there is a matching dominating $\mu$, then there is an exact 3-cover.
\end{claim}
\begin{proof}
Suppose there is a matching $\mu '$ that dominates $\mu$. By Claim \ref{claim:lex-pop1} and the fact that $\mu$ is lexicographically popular inside the gadgets, we get that $\sum_{v\in G_j^l}\vote_{\rm lex}^v(\mu ',\mu )\le 0$, and $\sum_{v\in G_j^l}\vote_{\rm lex}^v(\mu ',\mu )\le -1$ whenever $x_j^l$ loses an edge. Also, all $s_i^{\ell}$ agents have their best partner, so they cannot improve. Therefore, only the $a_i,b_i,c_i,d_i$ agents and $t$ can improve. 

First suppose that not all $a_i,b_i,c_i,d_i$ agents improve. Let $k,\ell,m,p$ denote the number of improving $a_i,b_i,c_i$ and $d_i$ agents respectively. 
Let $C$ be the cycle consisting of edges of the form $(a_i,b_i),(b_i,c_i),(c_i,d_i),(d_i,a_{i+1})$ and let $c$ be the number of components of $\mu '$ restricted to $C$ with at least one edge.
Then, $\vote_{\rm lex}(\mu ',\mu )\le k+\ell +m+p+1-\frac{k+p}{3}-\frac{k+\ell}{3}-\frac{\ell +m}{3}-(m+p)-c$. 
This is because, even if $t$ improves, the number of agents who improve is at most $k+\ell +m+p+1$, while the $a_i$ agents must drop at least $\frac{k+p}{3}$ of their $x_j^l$ neighbours together, the $b_i$ agents must drop $\frac{k+\ell}{3}$ of their neighbours together, the $c_i$ agents have to drop $\frac{k+\ell}{3}$ and finally the $d_i$ agents must lose $m+p$ of their $s_i^1,s_i^2$ partners. Also, the last vertex of each component of size at least 2 (restricted to the cycle $C$) must also get worse, because they do not get their best partner, but must lose some partners to be able to be matched to their worst one (the previous in the cycle). It is easy to see that in each such component, the number of improving $a_i,b_i$ agents is at most two more than the number of improving $c_i,d_i$ agents. Also, if any of $a_i,b_i,c_i,d_i$ improves in $\mu '$, then they get the next agent in the cycle, hence all $k+\ell +m+p$ improving agents are inside components with at least one edge. Hence, $\ell +k\le m+p+2c$. Putting it all together, we get that $\vote_{\rm lex}(\mu ',\mu )\le \frac{k+\ell}{3}-\frac{m+p}{3}+1-c\le 1-\frac{c}{3}<1$. As $\vote_{\rm lex}(\mu ',\mu )$ is integer, we get that $\vote_{\rm lex}(\mu ',\mu )\le 0$, contradiction.

Now suppose that all $a_i,b_i,c_i,d_i$ agents improve in $\mu '$, so all edges of the cycle $C$ must be in $\mu '$. As only $t$ can improve and all $6n$ $s^\ell_i$ agents get worse, the sum of $\vote_{\rm lex}^v(\mu ',\mu )$ outside the gadgets is at most $12n+1-6n=6n+1$. Each $a_i,b_i,c_i,d_i$ must drop at least 2 original edges. Hence, the number of gadgets whose $x^l_j$ is not matched to the best three edges is at least $\frac{6n}{3}+\frac{6n}{3}+\frac{6n}{3}=6n$. As $\mu '$ dominates $\mu $ and we have Claim~\ref{claim:lex-pop1}, this equality should hold. Equality is only possible if each $a_i$ agent is able to keep one $x_j^l$ partner, such that there is $n$ $x_j^l$ agents who get to keep all their partners from $\mu $. This means that the corresponding sets must form an exact 3-cover.
\end{proof}

Theorem \ref{thm:lex_hardness} follows from the above claims.
\end{proof}

\paragraph*{Proportional voting} One might argue that agents should have voting weights proportional to their capacities in order to make the voting more fair. However, we can show that both the existence and verification problems remain hard even if all capacities are the same, using the following lemma. 

\begin{lemma}\label{lem:proportional}
For any instance $I=(G;\succ ;b)$ with maximum capacity $q$, we can create an instance $I'$, where every capacity is $q$, and there is a lexicographically popular $b$-matching in $I'$ if and only if there is one in $I$. Furthermore, a $b$-matching $\mu $ is lexicographically popular in $I$, if and only if by adding some fixed edges, the obtained $b$-matching $\mu '$ is lexicographically popular in $I'$.
\end{lemma}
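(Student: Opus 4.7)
The plan is to construct $I'$ by padding each agent's capacity up to $q$ using personal dummy partners, and then to set up a tight bijective correspondence between lex popular $b$-matchings of $I$ and lex popular $b'$-matchings of $I'$.

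\textbf{Construction.} For each $v\in U\cup W$ with $b(v)<q$, introduce $q-b(v)$ new dummy agents $d_v^1,\dots,d_v^{q-b(v)}$, each adjacent only to $v$ and each of capacity $q$ (a dummy has only one possible edge, so this upper bound is never tight). Place these dummies at the top of $v$'s preference list in arbitrary order and keep the original real edges below; set $b'(v)=q$ for every original $v$. Let $D$ be the set of all dummy edges; these will be the ``fixed edges'' of the lemma. For any $b$-matching $\mu$ of $I$, the set $\mu':=\mu\cup D$ is a $b'$-matching of $I'$. The lemma then reduces to the following two claims: (i) every lex popular $b'$-matching in $I'$ contains $D$; (ii) a $b$-matching $\mu$ of $I$ is lex popular iff $\mu\cup D$ is lex popular in $I'$.

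\textbf{Step (i).} Suppose $\mu'$ is lex popular in $I'$ but $(v,d_v^i)\notin\mu'$. Form $\mu''$ by adding $(v,d_v^i)$ and, if $v$ is saturated in $\mu'$, deleting the bottom-ranked edge $e=(v,u)$ of $v$ in $\mu'$; since $v$ has only $q-b(v)<q$ dummies in total, a saturated $v$ must hold at least one real edge, so $e$ is real. A direct vote count gives $\vote_{\rm lex}^{d_v^i}(\mu'',\mu')=+1$, $\vote_{\rm lex}^v(\mu'',\mu')=+1$ (the top symmetric-difference edge at $v$ is the dummy $(v,d_v^i)\in\mu''$), and $\vote_{\rm lex}^u(\mu'',\mu')\ge-1$ (only if the saturated case applies), yielding $\vote_{\rm lex}(\mu'',\mu')\ge 1$. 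This contradicts popularity of $\mu'$, so $\mu'\supseteq D$ and $\mu:=\mu'\cap E$ is a well-defined $b$-matching of $I$ with $\mu'=\mu\cup D$.

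\textbf{Step (ii).} The easy direction: given a dominator $\nu$ of $\mu$ in $I$, set $\nu':=\nu\cup D$; every dummy lies in $\mu'\cap\nu'$, so dummies are indifferent and cancel out of each original $v$'s symmetric difference, giving $\vote_{\rm lex}(\nu',\mu')=\vote_{\rm lex}(\nu,\mu)>0$ and showing that $\nu'$ dominates $\mu'$. The harder direction: given $\nu'$ dominating $\mu'$ in $I'$, define $\nu$ by discarding all dummies from $\nu'$ and keeping only the top $b(v)$ real edges at each original $v$ whose real-edge count in $\nu'$ exceeds $b(v)$. Then $\nu$ is a $b$-matching in $I$, and the key claim is $\vote_{\rm lex}^v(\nu,\mu)\ge\vote_{\rm lex}^v(\nu',\mu')$ for every original $v$. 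Indeed, if all dummies of $v$ lie in $\nu'$ then automatically $|\nu'\cap E\cap E(v)|\le b(v)$, so no truncation occurs and the real symmetric differences at $v$ in the two instances coincide; otherwise some dummy of $v$ is missing from $\nu'$, so the top of $v$'s $I'$-symmetric-difference is that missing dummy lying in $\mu'$, forcing $\vote_{\rm lex}^v(\nu',\mu')=-1$, which the $I$-comparison trivially matches from above. Summing over original $v$ and observing that each $(v,d_v^i)\notin\nu'$ contributes exactly $-1$ to $\vote_{\rm lex}(\nu',\mu')$ while all matched dummies contribute $0$, one obtains $\vote_{\rm lex}(\nu,\mu)\ge\vote_{\rm lex}(\nu',\mu')+|D\setminus\nu'|>0$, so $\nu$ dominates $\mu$.

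\textbf{Main obstacle.} The subtle point is the harder direction of Step (ii): a naive local fix such as ``add one missing dummy at $v$ and drop $v$'s worst real edge'' can swing a neighbor $u$'s vote by as much as $-2$ and so need not preserve dominance. Truncating globally to the top $b(v)$ real edges sidesteps this because the construction supplies two independent sources of slack that exactly compensate for any aggressive pruning: the vote at every $v$ with a missing dummy is already pinned at the worst value $-1$ in the $I'$-comparison, and each missing dummy further contributes $-1$ in its own right to $\vote_{\rm lex}(\nu',\mu')$. This double slack pays for the global pruning and yields the required inequality without any case analysis on the neighbors of truncated edges.
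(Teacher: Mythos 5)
Your construction, your Step (i), and the easy direction of Step (ii) all match the paper's proof. The genuine gap is in the harder direction of Step (ii). Your key per-vertex claim, $\vote_{\rm lex}^v(\nu,\mu)\ge\vote_{\rm lex}^v(\nu',\mu')$, is justified in the case ``all dummies of $v$ lie in $\nu'$'' by asserting that no truncation occurs at $v$ and hence the real symmetric differences at $v$ coincide. But truncation at a \emph{neighbour} $u$ of $v$ can delete an edge $uv\in\nu'$ from $\nu$, changing $v$'s symmetric difference even though $v$ itself is never truncated. If $uv$ was the top element of $(\nu'\triangle\mu')\cap E(v)$ and lay in $\nu'$, then $v$'s vote can drop from $+1$ to $-1$. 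Your global accounting does not absorb this: each missing dummy of $u$ buys only $+1$ of slack (its vote goes from $-1$ in $I'$ to nonexistent, i.e.\ $0$, in $I$), $u$'s own vote is only guaranteed not to decrease, and $u$ may have to shed one surplus real edge per missing dummy, each of which can cost $2$ at its other endpoint. So the inequality $\vote_{\rm lex}(\nu,\mu)\ge\vote_{\rm lex}(\nu',\mu')+|D\setminus\nu'|$ is not established.

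The paper closes exactly this hole by doing the repair \emph{inside} $I'$ before projecting to $I$: given a dominating matching missing a dummy edge $(v,d_i^v)$, it adds that edge and deletes $v$'s worst edge only if $v$ is saturated. In that single step the dummy's vote flips from $-1$ to $+1$, a gain of $2$, which covers the at most $2$ lost at the displaced neighbour, while $v$'s vote (pinned at $-1$ as long as some dummy of $v$ is missing) cannot decrease; iterating yields a dominating matching containing all of $D$, and only then are the dummy edges stripped, which is vote-preserving because the dummies then lie in both matchings. If you want to keep your one-shot truncation, you would need an additional argument that the neighbours of over-capacity agents cannot lose more than the available slack, which is precisely the case analysis your writeup claims to avoid.
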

\begin{proof}
If every capacity in $I$ is $q$, then $I'=I$ suffices.
Otherwise, for each agent $v$ with capacity $c<q$, we create $q-c$ dummy agents $d_1^v,\dots,d_{q-c}^v$ with capacity $q$, who only find $v$ acceptable, and $v$ considers them as best.

If a $b$-matching $\mu '$ in $I'$ does not contain all $(v,d_i^v)$ edges, then $\mu '$ cannot be lexicographically popular. Indeed, if $v$ drops the worst edge if needed and goes to $d_i^v$, then 2 agents improve and 1 gets worse, so we get a $b$-matching that dominates $\mu '$.

In $I'$, if a $b$-matching $\mu '$ containing all $(v,d_i^v)$ edges is not lexicographically popular, then there is a $b$-matching that includes all $(v,d_i^v)$ edges that dominates $\mu '$. This is because otherwise, if the dominating $b$-matching $\mu ''$ does not include an edge $(v,d_i^v)$, then deleting the worst edge from $v$ if $v$ was saturated and adding $(v,d_i^v)$, we get a $b$-matching, where the situation of $v$ improves (but can still be worse off than in $\mu '$), the vote of $d_i^v$ changes from $-1$ to $1$ and at most one agent's vote becomes worse, so it still dominates $\mu '$. By iterating this, we can obtain such a $b$-matching that dominates $\mu '$ too.

Therefore, if there is a lexicographically popular $b$-matching $\mu $ in $I$, then by adding to $\mu$ all $(v,d_i^v)$ edges, the obtained $b$-matching $\mu '$ is lexicographically popular in $I'$, because if it is not, then there is a $b$-matching $\mu ''$ containing all $(v,d_i^v)$ edges that dominates it. But, deleting the $(v,d_i^v)$ edges from $\mu ''$, we get a $b$-matching $\mu '''$ that dominates $\mu $ in $I$, contradiction.

If there is a lexicographically popular $b$-matching $\mu '$ in $I'$, then it contains all $(v,d_i^v)$ edges and by dropping them, we get a $b$-matching $\mu$ that is lexicographically popular in $I$. This is because if there was a $b$-matching dominating $\mu$, then adding all $(v,d_i^v)$ edges to it we would get a matching that dominates $\mu '$, contradiction.
\end{proof}

\section{Conclusion}
In this paper, we have shown that algorithmic results on two-sided many-to-many popular matchings can be generalized to two-sided popular matching problems with matroid constraints involving arbitrary matroids. The main tool that allows the extension to arbitrary matroids (not just weakly base orderable ones) is a new result on exchanges between two bases $I$ and $J$, that establishes a previously unknown relation between perfect matchings of exchanges from $I$ to $J$ and perfect matchings of exchanges from $J$ to $I$. This seems to be an interesting matroid property that may have applications in other areas involving ordered matroids, too.

Although the definition of popularity for matroid intersection is somewhat cumbersome, our results show that there is always a special maximum size popular solution that satisfies the stronger and more elegant property of \emph{super popularity}, and has maximum size among all solutions satisfying a weaker and more elegant property, \emph{weak defendability}. This is a remarkable phenomenon that shows a certain robustness of the definition of popularity.

We have also considered a different kind of natural voting method for many-to-many matchings with preferences, which results in the notion of lexicographic popularity. We have shown that, in contrast to popularity, both the existence problem and the verification problem are hard to decide. This raises the question whether the verification problem is efficiently solvable for popularity (or for the other notions that we have introduced: super-popularity, defendability, weak defendability). We could not settle these questions, so we pose it as an open problem: can we decide in polynomial time whether a given common independent set is popular?

\section*{Acknowledgement}

The work was supported by the Lend\"ulet Programme of the Hungarian Academy of Sciences -- grant number LP2021-1/2021, by the Hungarian National Research, Development and Innovation Office -- NKFIH, grant number TKP2020-NKA-06, and by JST PRESTO Grant Number JPMJPR212B.

\bibliographystyle{plain}
\bibliography{cit}
 
\end{document}